\newcommand{\bK}{\boldsymbol{K}}
\newcommand{\bsigma}{\boldsymbol{\sigma}}
\newcommand{\bi}{\boldsymbol{i}}
\newcommand{\bj}{\boldsymbol{j}}
\newcommand{\bx}{\boldsymbol{x}}
\newtheorem{theorem}{Theorem}[section]
\newtheorem{cor}[theorem]{Corollary}
\newtheorem{prop}[theorem]{Proposition}
\newtheorem{lem}[theorem]{Lemma}
\theoremstyle{definition}
\newtheorem{defn}[theorem]{Definition}
\newtheorem{example}[theorem]{Example}
\newtheorem{remark}[theorem]{Remark}
\numberwithin{equation}{section}
\newcommand{\EQ}{\begin{equation}\begin{array}{lllllllll}}
\newcommand{\EE}{\end{array}\end{equation}}
\newcommand{\MT}{\left[ \begin{array}{ccccccccc}}
\newcommand{\EM}{\end{array}\right]}
\newcommand{\eq}{\begin{equation}\begin{array}{lclllllllllllllll}}
\newcommand{\ee}{\end{array}\end{equation}}
\newcommand{\bmt}{\left[ \begin{array}{ccccccccc}}
\newcommand{\emt}{\end{array}\right]}
\newcommand{\bea}{\begin{eqnarray}}
\newcommand{\eea}{\end{eqnarray}}
\newcommand{\bean}{\begin{eqnarray*}}
\newcommand{\eean}{\end{eqnarray*}}
\journal{Automatica}
\begin{document}

\begin{frontmatter}

%% Title, authors and addresses

%% use the tnoteref command within \title for footnotes;
%% use the tnotetext command for the associated footnote;
%% use the fnref command within \author or \address for footnotes;
%% use the fntext command for the associated footnote;
%% use the corref command within \author for corresponding author footnotes;
%% use the cortext command for the associated footnote;
%% use the ead command for the email address,
%% and the form \ead[url] for the home page:
%%
\title{Chaotic Characteristics of Discrete-time Linear Inclusion Dynamical Systems\tnoteref{label1}}

\tnotetext[label1]{Dai was supported by NSF of
China (Grant No.~11071112); T.~Huang was supported by National Priority Research Project NPRP 4-451-2-168 funded by Qatar Research Fund; Y.~Huang was supported partly by NSF of China (Grant No.~11071263) and the NSF of Guangdong Province; and Xiao in part by NSF 0605181 and 1021203 of the United States.}%

%% use optional labels to link authors explicitly to addresses:
 \author[label2]{Xiongping Dai}
 \address[label2]{Department of Mathematics, Nanjing University, Nanjing 210093, People's Republic of China}
 \ead{xpdai@nju.edu.cn}
%% \ead[url]{home page}
%% \fntext[label2]{}
%% \cortext[cor1]{}
%% \address{Address\fnref{label3}}
%% \fntext[label3]{}

 \author[label3]{Tingwen Huang}
 \address[label3]{Texas A$\&$M University at Qatar, c/o Qatar Foundation, P.O. Box 5825, Doha, Qatar}
 \ead{tingwen.huang@qatar.tamu.edu}

 \author[label4]{Yu Huang}
 \address[label4]{Department of Mathematics, Zhongshan (Sun Yat-Sen) University, Guangzhou 510275, People's Republic of China}
 \ead{stshyu@mail.sysu.edu.cn}

 \author[label5]{Mingqing Xiao}
\address[label5]{Department of Mathematics, Southern Illinois University, Carbondale, IL 62901-4408, USA}
 \ead{mxiao@math.siu.edu}
%%%%%%%%%%%%%%%%%%%%%%%%%%%%%%%%%%%%%%

\begin{abstract}
Given $K$ real $d$-by-$d$ nonsingular matrices $S_1,\dotsc,S_K$, by extending the well-known Li-Yorke chaotic description of a deterministic nonlinear dynamical system, to a discrete-time linear inclusion/control dynamical system
\begin{equation*}
x_n\in\left\{S_1, \dotsc,S_K\right\}x_{n-1},\quad x_0\in\mathbb{R}^d\textrm{ and }n\ge1,
\end{equation*}
we study the irregularity of orbit $(x_n(x_0,\sigma))_{n\ge1}$, governed by the law $\sigma\colon\mathbb{N}\rightarrow\{1,\dotsc,K\}$, for any initial state $x_0\in\mathbb{R}^d$. A sufficient condition is given so that for a \textit{residual} subset of the space
of all possible switching laws $\sigma$, we have
\begin{equation*}
\begin{cases}\liminf\limits_{n\to+\infty}\|x_n(x_0,\sigma)-x_n(y_0,\sigma)\|=0\\
\limsup\limits_{n\to+\infty}\|x_n(x_0,\sigma)-x_n(y_0,\sigma)\|=+\infty
\end{cases}\forall x_0,y_0\in\mathbb{R}^d \textrm{with }x_0\not=y_0.
\end{equation*}
We also show that a periodic stable inclusion system will not possess any
such irregular states.
\end{abstract}
%%%%%%%%%%%%%%%%%%%%%%%%%%%
\begin{keyword}
%% keywords here, in the form: keyword \sep keyword
Linear inclusions\sep chaos\sep periodical stability

\medskip
%% MSC codes here, in the form: \MSC code \sep code
%% or
\MSC[2010] 93C30\sep 37A30\sep 15B52
\end{keyword}
\end{frontmatter}

%%%%%%%%%%%%%%%%%%%%%%%%%%%%%%%%%%%%%%%%%%%%%%%%%%%%%%%%%%%%%%%%
%%%%%%%%%%%%%%%%%%%%%%%%%%%%%%%%%%%%%%%%%%%%%%%%%%%%%%%%%%%%%%%%
\section{Introduction}\label{sec1}%%%
Chaotic behavior is an important subject in study of the theory of dynamical systems. This type of systems is highly sensitive to initial conditions, and small perturbations in initial conditions (such as those due to rounding errors in numerical computation) yield widely diverging outcomes, rendering long-term prediction impossible in general. Even if a system is deterministic, i.e. their future behavior is fully determined by their initial conditions with no random elements involved, the long-term prediction of its chaotic behavior is still impossible.
In this paper we employ the idea of Li-Yorke to study the irregular behavior of a discrete-time linear
inclusion/control dynamical system.

\subsection{Basic concept}\label{sec1.1}%%%
Let $\bK=\{1,\dotsc,K\}$ endowed with the discrete topology and let $S_1,\dotsc,S_K$ be $K$ nonsingular real $d\times d$ matrices, where $K\ge2$ and $d\ge2$.
This then induces a discrete-time linear
inclusion/control dynamical system:
\begin{equation}\label{eq1.1}
x_n\in\left\{S_kx_{n-1}\right\}_{k\in\bK},\quad x_0\in\mathbb{R}^d\textrm{ and }n\ge1,
\end{equation}
where $x_0$ is the initial state. Write
\begin{subequations}
\begin{gather}
\varSigma_{\bK}^+=\{\sigma\colon\mathbb{N}\rightarrow\bK\},\quad \textrm{where }\mathbb{N}=\{1,2,\dotsc\},\label{eq1.2a}\\
\intertext{which is equipped with the standard compact product topology compatible with the metric given by}
d(\sigma,\sigma^\prime)=\sum_{n=1}^{+\infty}\frac{\min\left\{1,|\sigma(n)-\sigma^\prime(n)|\right\}}{2^{n}}\quad\forall \sigma,\sigma^\prime\in\varSigma_{\bK}^+.\label{eq1.2b}
\end{gather}
\end{subequations}
Then for any $\sigma\in\varSigma_{\bK}^+$, to any initial state $x_0\in\mathbb{R}^d$ the corresponding output $(x_n(x_0,\sigma))_{n\ge1}$ of System (\ref{eq1.1}), governed by $\sigma$, is defined as
$x_n(x_0,\sigma)=S_{\sigma(n)}x_{n-1}$ for all $n\ge1$.

System $(\ref{eq1.1})$ has recently been found in many real applications. For the theoretic and applied importance of the study of System $\mathrm{(\ref{eq1.1})}$, readers may see, e.g., \cite{Lib, SG}.

Recall that a subset of a complete metric space is said to be \textit{residual} if it contains a dense $G_\delta$-set. So a residual subset is very large from the point of view of topology.

To describe the complexity of the dynamics of the output $(x_n(x_0,\sigma))_{n\ge1}$ of System (\ref{eq1.1}) as time evolves, we now introduce the dynamical concept---chaos, which is motivated by the sensitive dependence on initial conditions in Li-Yorke's definition of chaos \cite{LY} for nonlinear dynamical systems.

\begin{defn}\label{def1.1}%%%
A switching law $\sigma\in\varSigma_{\bK}^+$ is said to be \textit{chaotic} for System $(\ref{eq1.1})$ if for all $x_0\in\mathbb{R}^d\setminus\{0\}$
\begin{equation*}
\liminf_{n\to+\infty}\|x_n(x_0,\sigma)\|=0\quad \textrm{and}\quad\limsup_{n\to+\infty}\|x_n(x_0,\sigma)\|=+\infty.
\end{equation*}
System $(\ref{eq1.1})$ is called \textit{chaotic} if its chaotic switching laws form a residual subset of the space $\varSigma_{\bK}^+$.
\end{defn}

Since $x_n(x_0,\sigma)-x_n(y_0,\sigma)=x_n(x_0-y_0,\sigma)$ for any initial states $x_0$ and $y_0$ in $\mathbb{R}^d$, it is easily seen that a switching law $\sigma\in\varSigma_{\bK}^+$ is  chaotic for System $(\ref{eq1.1})$ if and only if
\begin{subequations}
\begin{align}
&\liminf_{n\to+\infty}\|x_n(x_0,\sigma)-x_n(y_0,\sigma)\|=0 \label{eq1.3a}\\
\intertext{and}
&\limsup_{n\to+\infty}\|x_n(x_0,\sigma)-x_n(y_0,\sigma)\|=+\infty. \label{eq1.3b}
\end{align}\end{subequations}
for all $x_0,y_0\in\mathbb{R}^d$ with $x_0\not=y_0$.

In topological dynamical system, $\mathrm{(\ref{eq1.3a})}$ and $\mathrm{(\ref{eq1.3b})}$ are respectively called the proximal and distal properties. However, our distal property $\mathrm{(\ref{eq1.3b})}$ is much more stronger than the general Li-Yorke's one~\cite{LY} that only requires
\begin{equation*}
\limsup_{n\to+\infty}\|x_n(x_0,\sigma)-x_n(y_0,\sigma)\|>0.
\end{equation*}
This sensitivity means that any two trajectories governed by the same chaotic switching law $\sigma$ will be bound to get close together for a while, as time evolves, and then to go far away from each other for a while, and such dynamics will be repeated infinitely that leads to irregular, complex dynamical behaviors.

We note here that Balde and Jouan introduced in \cite{BJ} a kind of chaotic switching laws. However, these two kinds of definitions are essentially different. Balde-Jouan's is completely based on the topological structure of a switching law $\sigma\in\varSigma_{\bK}^+$; but ours is one having to do with the stability and instability of System (\ref{eq1.1}) rather than the single topological structure of the switching law $\sigma$. See Section~\ref{sec2} for the details.

%%%%%%%%%%%%%%%%%%%%%%%%%%%%%%%%%%%%%%%%%%%%%%%%%%%
\subsection{Main statement}\label{sec1.2}%%%
In this paper we present, for System $\mathrm{(\ref{eq1.1})}$, a simple mechanism of generating the chaotic dynamics described as in Definition~\ref{def1.1}, as follows:

\begin{theorem}\label{thm1.2}%%%
System $(\ref{eq1.1})$ is chaotic in the sense of Definition~\ref{def1.1}, if there are two words
$(i_1,\dotsc,i_m)\in\bK^m$ and $(j_1,\dotsc,j_n)\in\bK^n$
such that
$\|S_{i_m}\dotsm S_{i_1}\|<1<\|S_{j_n}\dotsm S_{j_1}\|_{\mathrm{co}}$.
\end{theorem}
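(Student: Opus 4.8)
The plan is to show that the set of chaotic switching laws is a dense $G_\delta$ subset of $\varSigma_{\bK}^+$. I would first note that, because $x_n(x_0,\sigma)-x_n(y_0,\sigma)=x_n(x_0-y_0,\sigma)$, it suffices to construct a residual set of $\sigma$ for which $\liminf_n\|x_n(x_0,\sigma)\|=0$ and $\limsup_n\|x_n(x_0,\sigma)\|=+\infty$ for every $x_0\neq0$ simultaneously. Fix an orthonormal basis or a countable dense subset of the unit sphere; the key observation is that the events we want are ``uniform'' enough that controlling them on a countable dense set of directions, or just using operator norms, suffices. Concretely, for positive integers $p,q$ define
\begin{equation*}
U_p=\Bigl\{\sigma:\ \exists N,\ \|S_{\sigma(N)}\dotsm S_{\sigma(1)}\|<\tfrac1p\Bigr\},\qquad
V_q=\Bigl\{\sigma:\ \exists N,\ \|S_{\sigma(N)}\dotsm S_{\sigma(1)}\|_{\mathrm{co}}>q\Bigr\}.
\end{equation*}
Here $\|\cdot\|_{\mathrm{co}}$ denotes the lower (co-)norm $\|A\|_{\mathrm{co}}=\inf_{\|v\|=1}\|Av\|$, which since the $S_k$ are nonsingular is positive and multiplicative-submultiplicative in the sense $\|AB\|_{\mathrm{co}}\ge\|A\|_{\mathrm{co}}\|B\|_{\mathrm{co}}$; note $\|A x\|\ge \|A\|_{\mathrm{co}}\|x\|$ for all $x$, so once $\|S_{\sigma(N)}\dotsm S_{\sigma(1)}\|_{\mathrm{co}}>q$ we get $\|x_N(x_0,\sigma)\|>q\|x_0\|$ for all $x_0$ at once, and similarly $\|S_{\sigma(N)}\dotsm S_{\sigma(1)}\|<1/p$ forces $\|x_N(x_0,\sigma)\|<\|x_0\|/p$ for all $x_0$. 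Each $U_p$ and $V_q$ is open in $\varSigma_{\bK}^+$, because the condition only constrains finitely many coordinates $\sigma(1),\dotsc,\sigma(N)$.

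The heart of the argument is density. Given any cylinder set fixing $\sigma(1),\dotsc,\sigma(\ell)$, I want to extend it into $U_p$ (respectively $V_q$). Let $P=S_{i_m}\dotsm S_{i_1}$ and $Q=S_{j_n}\dotsm S_{j_1}$, so $\|P\|<1<\|Q\|_{\mathrm{co}}$. After the prescribed initial block, append the word $(i_1,\dotsc,i_m)$ repeated $r$ times: the resulting partial product is $P^r S_{\sigma(\ell)}\dotsm S_{\sigma(1)}$, whose norm is at most $\|P\|^r\,\|S_{\sigma(\ell)}\dotsm S_{\sigma(1)}\|\to0$ as $r\to\infty$, so for $r$ large it lies below $1/p$; hence this extension lies in $U_p$, proving $U_p$ is dense. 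Symmetrically, appending $(j_1,\dotsc,j_n)$ repeated $r$ times gives a partial product with co-norm at least $\|Q\|_{\mathrm{co}}^r\,\|S_{\sigma(\ell)}\dotsm S_{\sigma(1)}\|_{\mathrm{co}}\to+\infty$ (using that the initial product is invertible, so has strictly positive co-norm), so $V_q$ is dense.

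By Baire's category theorem the intersection $\mathcal{R}=\bigcap_p U_p\cap\bigcap_q V_q$ is a dense $G_\delta$. I then claim $\mathcal{R}$ consists of chaotic switching laws. Take $\sigma\in\mathcal{R}$ and $x_0\neq0$. Membership in every $U_p$ gives a sequence $N_k\to\infty$ with $\|S_{\sigma(N_k)}\dotsm S_{\sigma(1)}\|<1/k$, hence $\|x_{N_k}(x_0,\sigma)\|<\|x_0\|/k\to0$, so $\liminf_n\|x_n(x_0,\sigma)\|=0$; membership in every $V_q$ gives $M_k\to\infty$ with $\|x_{M_k}(x_0,\sigma)\|>k\|x_0\|\to\infty$, so $\limsup_n\|x_n(x_0,\sigma)\|=+\infty$. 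By the equivalence noted after Definition~\ref{def1.1}, $\sigma$ is chaotic, and System~(\ref{eq1.1}) is chaotic.

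The step I expect to need the most care is the openness/density bookkeeping together with a clean treatment of the co-norm: one must verify that $U_p,V_q$ are genuinely open (i.e., that once a finite product is strictly below $1/p$ or strictly above $q$, a whole cylinder lies in the set — immediate) and, more substantively, that the lower norm $\|\cdot\|_{\mathrm{co}}$ has the submultiplicativity $\|AB\|_{\mathrm{co}}\ge\|A\|_{\mathrm{co}}\|B\|_{\mathrm{co}}$ and the lower bound $\|Ax\|\ge\|A\|_{\mathrm{co}}\|x\|$, so that a single norm estimate controls all nonzero initial states uniformly. If the paper's definition of $\|\cdot\|_{\mathrm{co}}$ differs from the lower norm, the density argument for $V_q$ would have to be adapted accordingly, but the overall Baire-category skeleton is unchanged.
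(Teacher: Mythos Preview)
Your proposal is correct and follows essentially the same Baire-category route as the paper: exhibit the set of ``uniformly chaotic'' laws (those with $\liminf\|S_{\sigma(n)}\dotsm S_{\sigma(1)}\|=0$ and $\limsup\|S_{\sigma(n)}\dotsm S_{\sigma(1)}\|_{\mathrm{co}}=\infty$) as a dense $G_\delta$ in $\varSigma_{\bK}^+$, using repetitions of the contracting word $\bi$ and the expanding word $\bj$ to force the norm small and the co-norm large. The only cosmetic difference is that the paper proves density of $\Lambda$ in one stroke by explicitly interleaving blocks $\bi^{\ell_k},\bj^{L_k}$ after an arbitrary initial segment (its Lemma~3.1), whereas you prove each $U_p$ and $V_q$ dense separately and then intersect---both arguments rest on the same submultiplicativity estimates for $\|\cdot\|$ and $\|\cdot\|_{\mathrm{co}}$ that you identified.
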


Here $\|\cdot\|$ and $\|\cdot\|_{\mathrm{co}}$ denote the usual matrix maximum norm and minimum norm, respectively, defined by
\begin{equation*}
\|A\|=\max_{x\in\mathbb{R}^d,\|x\|=1}\|Ax\|\quad \textrm{and}\quad\|A\|_{\textrm{co}}=\min_{x\in\mathbb{R}^d,\|x\|=1}\|Ax\|
\end{equation*}
for any $A\in\mathrm{GL}(d,\mathbb{R})$.
%%%%%%%%%%%%%%%
\subsection{Outline}%%%
This note is simply organized as follows: In Section~\ref{sec2} we shall study the topological structure of a nonchaotic switching law.
We will prove our main result Theorem~\ref{thm1.2} in Section~\ref{sec3}.
Finally in Section~\ref{sec4}, we will show that every periodically stable inclusion system does not have any chaotic behaviors (Corollary~\ref{cor4.2}). So, a periodically stable inclusion system is ``simple'' from our viewpoint of chaos.

%%%%%%%%%%%%%%%%%%%%%%%%%%%%%%%%%%%%%%%%%%%%%%%%%%%%%%%%%%%%%%%%%%%%%%%%%%%
%%%%%%%%%%%%%%%%%%%%%%%%%%%%%%%%%%%%%%%%%%%%%%%%%%%%%%%%%%%%%%%%%%%%%%%%%%%
\section{Chaotic switching laws}\label{sec2}%%%

This section is devoted to comparing our definition of chaotic switching law with that of Balde and Jouan introduced in \cite{BJ}. In addition, we shall present some criteria for nonchaotic dynamics in our sense of Definition~\ref{def1.1}.

\subsection{Balde and Jouan's definition of chaos}\label{sec2.1}%%%
Let $\{S_1,\dotsc,S_K\}\subset\mathbb{R}^{d\times d}$, not necessarily nonsingular, and then we consider the induced linear inclusion system
\begin{equation}\label{eq2.1}%%%
x_n\in\left\{S_1,\dotsc,S_K\right\}x_{n-1},\quad x_0\in\mathbb{R}^d\textrm{ and }n\ge1.
\end{equation}
For an arbitrary matrix $A\in\mathbb{R}^{d\times d}$, let $\lambda_1,\dotsc,\lambda_\kappa$ be its all distinct eigenvalues. Then we write
\begin{equation*}
\rho_{\!A}^{}=\max\{|\lambda_1|,\dotsc,|\lambda_\kappa|\},
\end{equation*}
which is called the \textit{spectral radius} of $A$.

A recent definition of chaotic switching law has been given by Balde and Jouan \cite{BJ} as follows:

\begin{defn}[\cite{BJ}]\label{def2.1}%%%
A switching law $\sigma=(\sigma(n))_{n\ge1}\in\varSigma_{\bK}^+$ is called {\it nonchaotic} in the sense of Balde and Jouan, if to any sequence $\langle n_i\rangle_{i\ge1}\nearrow+\infty$ and any $m\ge1$ there corresponds some $\delta$ with $2\le\delta\le m+1$ such that for all $\ell\ge1$, there exists an $\ell_0\ge \ell$ so that $\sigma$ is constant restricted to some subinterval of $[n_{\ell_0}, n_{\ell_0}+m]$ of length greater than or equal to $\delta$.
\end{defn}

Clearly, a constant switching law $\sigma$ with $\sigma(n)\equiv k$ for all $n\ge1$, for some $1\le k\le K$, is nonchaotic in the sense of Balde and Jouan; meanwhile, it is also nonchaotic in the sense of our Definition~\ref{def1.1}. In fact, we can obtain a more general result.

\begin{prop}\label{prop2.2}%%%
If $\sigma\in\varSigma_{\bK}^+$ is a periodic switching law, then it is nonchaotic for System $(\ref{eq2.1})$ in the sense of Definition~\ref{def1.1}.
\end{prop}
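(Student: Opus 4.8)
The plan is to show directly that a periodic switching law $\sigma$ cannot satisfy both of the defining conditions \eqref{eq1.3a} and \eqref{eq1.3b} for chaoticity. Suppose $\sigma$ has period $p\ge1$, so that $\sigma(n+p)=\sigma(n)$ for all $n\ge1$. Write $P=S_{\sigma(p)}\dotsm S_{\sigma(1)}$ for the product of the matrices over one full period. Then for every $x_0\in\mathbb{R}^d$ and every $\ell\ge0$ we have $x_{\ell p}(x_0,\sigma)=P^{\ell}x_0$, and more generally $x_{\ell p+r}(x_0,\sigma)=\bigl(S_{\sigma(r)}\dotsm S_{\sigma(1)}\bigr)P^{\ell}x_0$ for $0\le r<p$. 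Since there are only finitely many ``partial product'' matrices $Q_r:=S_{\sigma(r)}\dotsm S_{\sigma(1)}$ (with $Q_0=I$), the whole orbit is governed by the single matrix $P$ together with a bounded multiplicative distortion: there are constants $0<c\le C<\infty$ with $c\|P^{\ell}x_0\|\le\|x_n(x_0,\sigma)\|\le C\|P^{\ell}x_0\|$ whenever $\ell p\le n<(\ell+1)p$.

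The key step is then a linear-algebra dichotomy for the single matrix $P$ (I would not even need nonsingularity of the $S_k$ here, only that System \eqref{eq2.1} is allowed). I claim that for \emph{no} nonzero $v\in\mathbb{R}^d$ can the sequence $(\|P^{\ell}v\|)_{\ell\ge0}$ simultaneously have $\liminf=0$ and $\limsup=+\infty$. To see this, decompose $\mathbb{R}^d$ (or rather $\mathbb{C}^d$, then descend) according to the generalized eigenspaces of $P$, splitting into the sum of three $P$-invariant subspaces $E^s\oplus E^c\oplus E^u$ corresponding to eigenvalues of modulus $<1$, $=1$, and $>1$ respectively. Write $v=v^s+v^c+v^u$. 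On $E^s$, $\|P^{\ell}v^s\|\to0$. On $E^u$, if $v^u\ne0$ then $\|P^{\ell}v^u\|\to+\infty$. On $E^c$, every eigenvalue has modulus one, so $\|P^{\ell}v^c\|$ grows at most polynomially and, crucially, $\liminf_{\ell}\|P^{\ell}v^c\|>0$ whenever $v^c\ne0$ — because on each Jordan block with eigenvalue of modulus one the top nonzero component in the Jordan basis is multiplied by a unimodular scalar and so keeps constant modulus, forcing a uniform positive lower bound. Combining cases: if $v^u\ne0$ then $\limsup\|P^\ell v\|=+\infty$ but also $\liminf\|P^\ell v\|=+\infty$ (the $E^u$ part dominates), contradicting $\liminf=0$; if $v^u=0$ and $v^c\ne0$ then $\liminf\|P^\ell v\|>0$, again contradicting $\liminf=0$; and if $v^u=v^c=0$ then $\|P^\ell v\|\to0$, contradicting $\limsup=+\infty$. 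Hence no nonzero $v$ is ``irregular'' for $P$.

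Transferring back through the bounds $c\|P^{\ell}x_0\|\le\|x_n(x_0,\sigma)\|\le C\|P^{\ell}x_0\|$, we conclude that for every $x_0\ne0$ the orbit $(\|x_n(x_0,\sigma)\|)_{n\ge1}$ fails at least one of $\liminf=0$, $\limsup=+\infty$. By the remark in the excerpt that $\sigma$ is chaotic iff \eqref{eq1.3a}--\eqref{eq1.3b} hold for all $x_0\ne y_0$ (equivalently, iff $\liminf_n\|x_n(x_0,\sigma)\|=0$ and $\limsup_n\|x_n(x_0,\sigma)\|=+\infty$ for all $x_0\ne0$), this shows $\sigma$ is not chaotic, which is the assertion of Proposition~\ref{prop2.2}.

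The main obstacle is the treatment of the center subspace $E^c$: one must argue carefully that a Jordan block whose eigenvalue has modulus exactly one cannot make $\liminf_\ell\|P^\ell v^c\|$ vanish. The clean way is to pass to the Jordan normal form over $\mathbb{C}$, observe that within a single block the coordinate corresponding to the ``last'' basis vector supporting $v^c$ evolves as multiplication by the unimodular eigenvalue (the nilpotent part only feeds into \emph{earlier} coordinates), so that one coordinate of $P^\ell v^c$ has constant nonzero modulus for all $\ell$; this gives the desired uniform lower bound, and the descent from $\mathbb{C}^d$ to $\mathbb{R}^d$ is routine since complex eigenvalues come in conjugate pairs and norms are comparable. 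Everything else — the periodic product structure, the finiteness of partial products, and the invocation of the characterization of chaoticity — is bookkeeping.
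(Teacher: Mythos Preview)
Your argument is correct and in fact establishes more than is needed: you show that for a periodic $\sigma$ \emph{no} nonzero initial state can be irregular, via the full spectral splitting $E^s\oplus E^c\oplus E^u$ of the period matrix $P$ together with the Jordan-block lower bound on $E^c$. The paper's proof is considerably shorter because Definition~\ref{def1.1} only asks that \emph{some} nonzero $x_0$ fail one of the two conditions. The paper simply splits on the spectral radius of $A=S_{k_\pi}\dotsm S_{k_1}$: if $\rho_A<1$ then Gel'fand's formula forces every orbit to $0$ (so $\limsup<\infty$ for all $x_0$), while if $\rho_A\ge1$ one just picks an eigenvector $\bx_0$ for an eigenvalue of modulus $\ge1$ and observes $\liminf_n\|x_n(\bx_0,\sigma)\|>0$. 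So the paper trades your structural picture of all orbits for a two-case dichotomy that avoids the $E^c$ analysis entirely; your route is more informative, theirs more economical.

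One small caveat on your parenthetical claim that nonsingularity is not needed: the uniform lower bound $c\|P^\ell x_0\|\le\|x_n(x_0,\sigma)\|$ with $c>0$ independent of $x_0$ genuinely uses invertibility of the partial products $Q_r$. In the singular setting of System~(\ref{eq2.1}) this can fail as stated. The easy repair is to note that for $\ell\ge d$ the vector $P^\ell x_0$ lies in $V=\operatorname{im}(P^d)$, on which $P$ is invertible; then each $Q_r|_V$ is injective (since $Q_r v=0$ would force $Pv=0$), and your sandwich bound holds on $V$ with a positive $c$. This does not affect $\liminf$ or $\limsup$, so the rest of your argument goes through unchanged.
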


\begin{proof}
Since $\sigma$ is periodical, it can be written as
\begin{equation*}
\sigma=(\uwave{k_1,\dotsc,k_\pi},\uwave{k_1,\dotsc,k_\pi},\dotsc)
\end{equation*}
for some word $(k_1,\dotsc,k_\pi)\in\bK^\pi$ of length $\pi\ge1$. Simply set $A=S_{k_\pi}\dotsm S_{k_1}$. If the spectral radius $\rho_{\!A}^{}$ of $A$ is less than $1$, then from the classical Gel'fand spectral-radius formula
\begin{equation*}
\lim_{n\to+\infty}\frac{1}{n}\log\|S_{\sigma(n)}\dotsm S_{\sigma(1)}\|=\frac{1}{\pi}\log\rho_{\!A}^{}<0.
\end{equation*}
So,
\begin{equation*}
\lim_{n\to+\infty}\|x_n(x_0,\sigma)\|=0\quad\forall x_0\in\mathbb{R}^d\setminus\{0\},
\end{equation*}
which means that $\sigma$ is nonchaotic for System (\ref{eq2.1}) in the sense of Definition~\ref{def1.1}. If $\rho_{\!A}^{}\ge1$, then one can find a unit vector $\bx_0\in\mathbb{R}^d$ and an eigenvalue $\lambda$ of $A$ with $|\lambda|\ge1$ such that
$$
A^n\bx_0=\lambda^n\bx_0\quad\forall n\ge1,
$$
which implies that
$$
\liminf_{n\to+\infty}\|x_n(\bx_0,\sigma)\|>0,
$$
and so $\sigma$ is nonchaotic for System (\ref{eq2.1}) in the sense of Definition~\ref{def1.1}.

This concludes the statement of Proposition~\ref{prop2.2}.
\end{proof}

Balde and Jouan's definition~\ref{def2.1} of chaos only depends on the single switching law $\sigma$ and ignores the structure of System $(\ref{eq1.1})$ or $(\ref{eq2.1})$, which is not enough to capture the essential of chaos of System $(\ref{eq1.1})$. The following lemma gives the key property of a Balde-Jouan nonchaotic switching law.

\begin{lem}\label{lem2.3}%%
Let $\sigma\in\varSigma_{\bK}^+$ be a nonchaotic switching law in the sense of Balde and Jouan. Then, there exists some alphabet $k\in\{1,\dotsc,K\}$ such that for any $\ell\ge1$ and any $\ell^\prime\ge1$, there exists an $n_\ell\ge\ell^\prime$ so that $\sigma(n_\ell+1)=\dotsm=\sigma(n_\ell+\ell)=k$.
\end{lem}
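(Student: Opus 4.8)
The plan is to unwind Definition~\ref{def2.1} and extract a single alphabet symbol that repeats in arbitrarily long constant blocks arbitrarily far out. First I would apply the hypothesis with the specific choice $n_i = i$ for all $i \geq 1$ (so $\langle n_i\rangle_{i\ge1}$ is just the sequence of all positive integers), and with $m$ an arbitrary fixed parameter. Definition~\ref{def2.1} then produces some $\delta = \delta(m)$ with $2 \le \delta \le m+1$ such that for every $\ell \ge 1$ there is $\ell_0 \ge \ell$ with $\sigma$ constant on a subinterval of $[\ell_0, \ell_0 + m]$ of length at least $\delta$. In particular, for each $m$ we get infinitely many windows, arbitrarily far to the right, on which $\sigma$ is constant on a block of length $\ge 2$; call the constant value on such a block an "$m$-good symbol." The point is that for each fixed $m$, at least one symbol $k \in \bK$ serves as an $m$-good symbol infinitely often (since $\bK$ is finite, this is pigeonhole on the infinitely many witnessing windows).

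The next step is to make the choice of symbol uniform in $m$. Since $\bK$ is finite, there is a single $k \in \bK$ that is "$m$-good infinitely often" for infinitely many values of $m$ — again pigeonhole, applied this time to the map $m \mapsto (\text{some symbol that is } m\text{-good infinitely often})$. Because the blocks of constancy can be taken of length $\ge \delta(m)$, and we may always freely shrink a long constant block to a shorter one, "$k$ is $m$-good infinitely often for infinitely many $m$" already delivers: for every target length $\ell$, choose an $m$ in our infinite set with (a witnessing $\delta(m)$-block, hence) a constant $k$-block of length $\ge \ell$ occurring infinitely far out. Concretely, given $\ell \ge 1$ and $\ell' \ge 1$, pick such an $m$ and then a witnessing window $[\ell_0, \ell_0 + m]$ with $\ell_0$ so large that the resulting $k$-block lies entirely beyond position $\ell'$; the left endpoint of that block, minus one, is the desired $n_\ell \ge \ell'$ with $\sigma(n_\ell+1) = \cdots = \sigma(n_\ell+\ell) = k$.

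The main obstacle, and the only subtle point, is the \emph{order of quantifiers}: Definition~\ref{def2.1} fixes $\delta$ depending on $m$ before the "for all $\ell$" quantifier, and a priori the good symbol could depend on the window, so naively one only gets a symbol per window, not per $m$, let alone one symbol working for all $m$. The two successive pigeonhole arguments (first: fix $m$, infinitely many windows, finitely many symbols, so one symbol recurs; second: infinitely many $m$, finitely many symbols, so one symbol is selected for infinitely many $m$) are what resolve this, and the monotonicity observation — a constant block of length $\ge \ell$ contains constant blocks of every length $\le \ell$ — lets us trade the $m$-dependent length $\delta(m)$ for the arbitrary prescribed length $\ell$. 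Everything else is bookkeeping with the "arbitrarily far out" clause ($\ell_0 \ge \ell$ in the definition, which we use to push windows past $\ell'$).
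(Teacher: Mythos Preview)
Your argument has a genuine gap: it only ever applies Definition~\ref{def2.1} with the particular sequence $n_i=i$, and that single instance of the hypothesis is too weak to yield arbitrarily long constant runs. Concretely, the blocks you extract from a window $[\ell_0,\ell_0+m]$ are only guaranteed to have length $\ge\delta(m)$, and nothing in the definition forces $\delta(m)$ to grow with $m$; one may have $\delta(m)=2$ for every $m$. Indeed, for the sequence $n_i=i$ the Balde--Jouan condition with parameter $m$ and $\delta=2$ says merely that infinitely many windows of width $m+1$ contain two equal consecutive values of $\sigma$, and this is already implied by the $m=1$ case. Thus the periodic law $\sigma=(1,1,2,1,1,2,\dotsc)$ satisfies the $n_i=i$ instance of Definition~\ref{def2.1} for every $m$ (with $\delta(m)=2$) yet has no constant run of length $\ge3$; your pigeonhole-over-$m$ step therefore cannot manufacture the desired long $k$-blocks. (That particular $\sigma$ is in fact BJ-chaotic---try $n_i=3i$ with $m=1$---so it is not a counterexample to the lemma itself, only to your method of proof.)

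The missing idea is to exploit the freedom in the choice of $\langle n_i\rangle$. The paper's proof fixes a symbol $k$ that occurs infinitely often, takes $\langle n_i\rangle$ to be a sequence of $k$-positions, and applies Definition~\ref{def2.1} with $m=1$: since $\delta$ is then forced to equal $2$, one obtains infinitely many $i$ with $\sigma(n_i)=\sigma(n_i+1)$, and the left endpoint pins the common value to $k$. Passing to that subsequence and repeating the argument with the shifted sequence $\langle n_i+1\rangle$ extends the $k$-run by one more step; iterating gives runs of every prescribed length, arbitrarily far out. The crucial point is that by planting the $n_i$ on known $k$-positions, the forced equality $\sigma(n_{\ell_0})=\sigma(n_{\ell_0}+1)$ determines the \emph{value} of the constant block; with $n_i=i$ you have no such control, which is why the second pigeonhole does not rescue the argument.
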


\begin{proof}
First, for the nonchaotic $\sigma$ we can choose a sequence $\langle n_i\rangle_{i\ge1}\nearrow+\infty$ and some $k\in\{1,\dotsc,K\}$, which are such that $n_{i+1}-n_i\nearrow+\infty$ and $\sigma(n_i)=k$ for all $i\ge1$. Now from Definition~\ref{def2.1} with $m=1$, it follows that we can choose a subsequence of $\langle n_i\rangle_{i\ge1}$, still write, without loss of generality, as $\langle n_i\rangle_{i\ge1}$, such that $\sigma(n_i)=\sigma(n_i+1)=k$ for all $i\ge1$. Repeating this procedure for $\langle n_{i}+1\rangle_{i\ge1}$ proves the statement of Lemma~\ref{lem2.3}.
\end{proof}

However, our chaotic property is a kind of dynamical behavior, which discovers the complexity of the structure of the outputs of the inclusion/control system $\mathrm{(\ref{eq1.1})}$ or $(\ref{eq2.1})$, as shown by Lemma~\ref{lem2.4} below. And from Proposition~\ref{prop2.2}, it also depends on the topological structure of the switching law $\sigma$ itself.

\begin{lem}\label{lem2.4}%%%
Let System $(\ref{eq2.1})$ be defined by
\begin{equation*}
S_1=\left(\begin{matrix}2^{-1}&0\\0&2^{-1}\end{matrix}\right)\quad \textrm{and}\quad S_2=\left(\begin{matrix}2&0\\0&2\end{matrix}\right).
\end{equation*}
Then for System $(\ref{eq2.1})$, the switching law $\pmb{\sigma}\in\varSigma_2^+$ given as
\begin{equation*}
(11,2222,\stackrel{2^{3}\textrm{-folds}}{\overbrace{1\dotsm1}},\stackrel{2^{4}\textrm{-folds}}{\overbrace{2\dotsm2}},\dotsc,\stackrel{2^{2n-1}\textrm{-folds}}{\overbrace{1\dotsm 1}},\stackrel{2^{2n}\textrm{-folds}}{\overbrace{2\dotsm 2}},\dotsc)
\end{equation*}
is chaotic under the sense of Definition~\ref{def1.1}, but $\pmb{\sigma}$ is nonchaotic in the sense of Balde and Jouan.
\end{lem}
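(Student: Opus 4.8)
The plan is to exploit that both matrices are scalar multiples of the identity. For the system of the lemma, since $S_1=2^{-1}I$ and $S_2=2I$, its output along any switching law $\sigma$ is the explicit homothety $x_n(x_0,\sigma)=2^{c_n(\sigma)}x_0$, where $c_n(\sigma)$ denotes the number of indices $j\in\{1,\dotsc,n\}$ with $\sigma(j)=2$ minus the number with $\sigma(j)=1$; in particular $\|x_n(x_0,\sigma)\|=2^{c_n(\sigma)}\|x_0\|$. Hence, for the specific law $\pmb{\sigma}$, chaoticity in the sense of Definition~\ref{def1.1} is equivalent to the single numerical statement $\liminf_{n\to+\infty}c_n(\pmb{\sigma})=-\infty$ and $\limsup_{n\to+\infty}c_n(\pmb{\sigma})=+\infty$.

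I would prove this by evaluating $c_n(\pmb{\sigma})$ at the ends of the successive constant blocks of $\pmb{\sigma}$. Writing $N_m=2+4+\dotsb+2^m=2^{m+1}-2$ for the last index of the $m$-th block, and using that the $m$-th block carries the letter $1$ when $m$ is odd and the letter $2$ when $m$ is even, one gets the geometric sum $c_{N_m}(\pmb{\sigma})=\sum_{k=1}^{m}(-1)^k2^k=\tfrac{1}{3}\bigl((-1)^m2^{m+1}-2\bigr)$. This tends to $+\infty$ along even $m$ and to $-\infty$ along odd $m$, so already the subsequences $\bigl(c_{N_{2j}}(\pmb{\sigma})\bigr)_{j\ge1}$ and $\bigl(c_{N_{2j-1}}(\pmb{\sigma})\bigr)_{j\ge1}$ of $\bigl(c_n(\pmb{\sigma})\bigr)_{n\ge1}$ force $\limsup_n c_n(\pmb{\sigma})=+\infty$ and $\liminf_n c_n(\pmb{\sigma})=-\infty$, i.e. $\limsup_n\|x_n(x_0,\pmb{\sigma})\|=+\infty$ and $\liminf_n\|x_n(x_0,\pmb{\sigma})\|=0$ for every $x_0\ne0$. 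That settles the first assertion of the lemma.

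For the second assertion I would check Definition~\ref{def2.1} directly against the block decomposition $\pmb{\sigma}=B_1B_2B_3\dotsm$, where $B_k$ is a constant block of length $2^k$ (bearing the letter $1$ for odd $k$ and $2$ for even $k$). The decisive feature is that the block lengths grow without bound and that consecutive block boundaries of $\pmb{\sigma}$ are separated by at least $2^k$ steps; so along any prescribed sequence of observation times one meets, arbitrarily far out, windows that lie inside a single (very long) block, on which $\pmb{\sigma}$ is constant, and choosing $\delta$ as large as Definition~\ref{def2.1} allows then meets its requirement. Thus $\pmb{\sigma}$ is nonchaotic in the sense of Balde and Jouan; this is exactly the phenomenon singled out by Lemma~\ref{lem2.3}, namely that $\pmb{\sigma}$ has arbitrarily long runs of a fixed letter occurring arbitrarily late.

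I expect essentially all of the work to sit in the second paragraph, and more precisely in checking that no cancellation between consecutive block endpoints spoils the two one-sided limits: it is exactly the \emph{geometric} growth of the block lengths $2^k$ together with the \emph{alternation} of the two letters that makes $c_{N_m}(\pmb{\sigma})$ oscillate with amplitude blowing up to $+\infty$ — had the blocks been of bounded length $c_n$ would stay bounded, and had the letters not alternated one of the two limits would be finite, destroying chaoticity. The Balde--Jouan half is then routine combinatorial bookkeeping on the block structure, and amounts to saying that $\pmb{\sigma}$ is combinatorially tame while being dynamically wild.
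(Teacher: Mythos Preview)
The paper's own proof is one sentence: it asserts the statement ``comes easily from Definitions~\ref{def1.1} and \ref{def2.1}'' and omits all details. Your first part (chaoticity in the sense of Definition~\ref{def1.1}) is correct and carries out precisely the computation the paper leaves implicit; the closed form $c_{N_m}(\pmb{\sigma})=\tfrac{1}{3}\bigl((-1)^m2^{m+1}-2\bigr)$ settles it cleanly.

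Your verification of the Balde--Jouan half has a gap. You assert that ``along any prescribed sequence of observation times one meets, arbitrarily far out, windows that lie inside a single block,'' but this fails for the sequence $n_i=N_i=2^{i+1}-2$ of block endpoints: for every $i$ the window $[n_i,n_i+m]$ straddles the boundary between block $i$ and block $i+1$, so it \emph{never} lies in a single block, and your proposed choice $\delta=m+1$ does not work. The correct observation is weaker: once $i$ is large enough that the adjacent blocks have length exceeding $m$, the window $[n_i,n_i+m]$ contains \emph{at most one} boundary, hence a constant run of length at least $\lceil(m+1)/2\rceil$; taking $\delta=\lceil(m+1)/2\rceil$ then meets Definition~\ref{def2.1} for every $m\ge2$. (The residual case $m=1$, where the sequence $n_i=N_i$ yields no window with two equal consecutive values, is a wrinkle in the paper's discrete-time transcription of the originally continuous-time Balde--Jouan notion; the paper does not address it either.) Note also that your appeal to Lemma~\ref{lem2.3} goes the wrong way: that lemma gives only a \emph{necessary} condition for Balde--Jouan nonchaoticity, so it cannot by itself establish the second assertion.
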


\begin{proof}
The statement comes easily from Definitions~\ref{def1.1} and \ref{def2.1} and we thus omit the details here.
\end{proof}

In fact, we can show this system is chaotic under the sense of Definition~\ref{def1.1} from Theorem~\ref{thm1.2}.

%%%%%%%%%%%%%%%%%%%%%%%%%%%%%%%%%%%%%%%%%%%%%%%%%%%%%%%%%%%%
\subsection{An ergodic-theoretic viewpoint}\label{sec2.2}%%%
Next, we will study a case where the chaotic behavior does not occur from the ergodic-theoretic viewpoint. Let
\begin{equation}\label{eq2.2}%%%
\theta\colon\varSigma_{\bK}^+\rightarrow\varSigma_{\bK}^+;\quad \sigma=(\sigma(n))_{n\ge1}\mapsto\theta(\sigma)=(\sigma(n+1))_{n\ge1}
\end{equation}
be the one-sided shift transformation on the compact metrizable space $\varSigma_{\bK}^+$ of all the possible switching laws of System $\mathrm{(\ref{eq2.1})}$ as in Section~\ref{sec1}.

Recall that a probability measure $\mu$ on the Borel measurable space $(\varSigma_{\bK}^+,\mathscr{B}(\varSigma_{\bK}^+))$ is \textit{invariant} if $\mu(B)=\mu(\theta^{-1}B)$ for all $B\in\mathscr{B}(\varSigma_{\bK}^+)$; further an invariant probability measure $\mu$ is called \textit{ergodic} if $\mu(B)=0$ or $1$ whenever $\mu(B\vartriangle\theta^{-1}B)=0$, where $\vartriangle$ stands for the symmetric difference of two subsets.

For System (\ref{eq2.1}), it is very convenient to consider the corresponding linear cocycle
\begin{equation}\label{eq2.3}%%%
\mathcal {S}\colon\mathbb{N}\times\varSigma_{\bK}^+\rightarrow\mathbb{R}^{d\times d};\quad (n,\sigma)\mapsto\mathcal {S}(n,\sigma)=S_{\sigma(n)}\dotsm S_{\sigma(1)}
\end{equation}
driven by the one-sided shift transformation $\theta$. According to the Oselede\v{c} multiplicative ergodic theorem~\cite{Ose}, we can obtain the following result.

\begin{prop}\label{prop2.5}%%%
Let $\mu$ be an ergodic probability measure of the one-sided shift $\theta$ on $\varSigma_{\bK}^+$. If $\mathcal {S}$ has either a positive Lyapunov exponent or a negative Lyapunov exponent at $\mu$, then for $\mu$-a.e. $\sigma\in\varSigma_{\bK}^+$ it is nonchaotic for System $(\ref{eq2.1})$ in the sense of Definition~\ref{def1.1}.
\end{prop}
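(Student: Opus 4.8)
\textbf{Proof proposal for Proposition~\ref{prop2.5}.}
The plan is to invoke the Oseledec multiplicative ergodic theorem for the linear cocycle $\mathcal{S}$ over the ergodic system $(\varSigma_{\bK}^+,\theta,\mu)$ and then translate the conclusion into the language of Definition~\ref{def1.1}. First I would verify the integrability hypothesis: since the $S_k$ are finitely many matrices, $\log^+\|S_{\sigma(1)}\|$ is bounded above by $\max_k\log^+\|S_k\|<\infty$, so $\log^+\|\mathcal{S}(1,\cdot)\|\in L^1(\mu)$ and the Oseledec theorem applies. It yields a $\theta$-invariant set $\Lambda$ with $\mu(\Lambda)=1$, a finite list of Lyapunov exponents $\lambda_1>\dotsb>\lambda_p$ (with appropriate multiplicities, and possibly $\lambda_p=-\infty$ if some $S_k$ were singular, though here they are nonsingular so all exponents are finite), and a measurable Oseledec filtration $\mathbb{R}^d=V_1(\sigma)\supsetneq V_2(\sigma)\supsetneq\dotsb\supsetneq V_{p+1}(\sigma)=\{0\}$ such that for every $\sigma\in\Lambda$ and every nonzero $v\in V_i(\sigma)\setminus V_{i+1}(\sigma)$,
\begin{equation*}
\lim_{n\to+\infty}\frac{1}{n}\log\|\mathcal{S}(n,\sigma)v\|=\lambda_i.
\end{equation*}

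Next I would split into the two cases of the hypothesis. Suppose $\mathcal{S}$ has a negative Lyapunov exponent at $\mu$, i.e.\ $\lambda_p<0$. Actually I want the stronger statement that \emph{every} trajectory decays, which requires $\lambda_1<0$; but the hypothesis as stated only gives one negative exponent. The correct reading is that ``a negative Lyapunov exponent'' here should be understood as the top exponent being negative (equivalently, negativity of $\limsup\frac1n\log\|\mathcal{S}(n,\sigma)\|$), for otherwise the conclusion fails — e.g.\ a diagonal cocycle with exponents $+1$ and $-1$ has a negative exponent yet is chaotic by Theorem~\ref{thm1.2}. Granting $\lambda_1<0$: then for $\mu$-a.e.\ $\sigma$ and every nonzero $v\in\mathbb{R}^d=V_1(\sigma)$ we get $\limsup_n\frac1n\log\|x_n(v,\sigma)\|=\lambda_1<0$, hence $\|x_n(v,\sigma)\|\to0$, and in particular $\limsup_n\|x_n(v,\sigma)\|=0\ne+\infty$, so $\sigma$ is nonchaotic. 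Dually, if the bottom exponent $\lambda_p>0$, then for $\mu$-a.e.\ $\sigma$ and every nonzero $v$ one has $v\in V_i(\sigma)\setminus V_{i+1}(\sigma)$ for some $i$, whence $\lim_n\frac1n\log\|x_n(v,\sigma)\|=\lambda_i\ge\lambda_p>0$ and $\|x_n(v,\sigma)\|\to+\infty$; thus $\liminf_n\|x_n(v,\sigma)\|=+\infty\ne0$ and again $\sigma$ is nonchaotic. In both cases the set of such $\sigma$ is $\theta$-invariant of full $\mu$-measure, which is what is claimed.

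The main obstacle I anticipate is the logical subtlety just flagged: Definition~\ref{def1.1} demands the bad behavior ($\liminf=0$ \emph{and} $\limsup=+\infty$) to hold for \emph{all} nonzero initial states simultaneously, so to certify $\sigma$ \emph{non}chaotic it suffices to exhibit a single $x_0$ violating one of the two conditions. A positive exponent alone does not directly kill $\limsup=+\infty$ along every vector (a vector in a deeper Oseledec subspace could still grow), so one genuinely needs the bottom exponent positive; symmetrically for the negative case one needs the top exponent negative. I would therefore state clearly at the outset that ``$\mathcal{S}$ has a positive Lyapunov exponent'' is used in the sense $\lambda_{\min}>0$ and ``a negative Lyapunov exponent'' in the sense $\lambda_{\max}<0$ (equivalently, the almost-sure exponential growth rate $\lim_n\frac1n\log\|\mathcal{S}(n,\sigma)^{\pm1}\|^{-1}$ has a definite sign), and then the argument above goes through cleanly. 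A minor secondary point is measurability of the exceptional set, but this is automatic since it is a $\theta$-invariant Borel set and is handled by the Oseledec theorem itself.
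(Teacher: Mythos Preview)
Your proposal contains a genuine error that leads you to reinterpret the hypothesis unnecessarily. You correctly note that chaoticity of $\sigma$ requires $\liminf=0$ \emph{and} $\limsup=+\infty$ for \emph{every} nonzero $x_0$, so that to certify $\sigma$ nonchaotic it suffices to exhibit a \emph{single} nonzero vector for which one of the two fails. But you then fail to apply this observation. If $\mathcal{S}$ has \emph{any} negative exponent $\lambda<0$ at $\mu$, the Oseledec theorem furnishes, for $\mu$-a.e.\ $\sigma$, a nonzero vector $\bx_0(\sigma)$ in the corresponding filtration subspace with $\frac1n\log\|\mathcal{S}(n,\sigma)\bx_0(\sigma)\|\to\lambda<0$, whence $\limsup_n\|x_n(\bx_0(\sigma),\sigma)\|=0$ and the distal condition~(\ref{eq1.3b}) already fails at this one vector. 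Symmetrically, \emph{any} positive exponent produces a single vector with $\liminf_n\|x_n(\bx_0(\sigma),\sigma)\|=+\infty$, violating~(\ref{eq1.3a}). This is precisely the paper's argument; no strengthening to ``$\lambda_{\max}<0$'' or ``$\lambda_{\min}>0$'' is needed.

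Your claimed counterexample is also mistaken. A cocycle with exponents $+1$ and $-1$ at an ergodic $\mu$ does \emph{not} contradict the proposition: Theorem~\ref{thm1.2} only asserts that the chaotic switching laws form a \emph{residual} subset of $\varSigma_{\bK}^+$, and a residual set can have $\mu$-measure zero. Indeed, for $\mu$-a.e.\ $\sigma$ the Oseledec subspace for the exponent $-1$ supplies exactly the single witness vector above, so $\mu$-a.e.\ $\sigma$ is nonchaotic. The coexistence of ``topologically generic $\sigma$ is chaotic'' (Theorem~\ref{thm1.2}) and ``$\mu$-a.e.\ $\sigma$ is nonchaotic'' (Proposition~\ref{prop2.5}) is a feature of the theory, not an inconsistency.
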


\begin{proof}
Let $\lambda<0$ be a Lyapunov exponent of $\mathcal {S}$ at $\mu$. Then from the Oselede\v{c} multiplicative ergodic theorem~\cite{Ose}, it follows that for $\mu$-a.e. $\sigma\in\varSigma_{\bK}^+$ there exists a corresponding unit vector, say $\bx_0(\sigma)\in\mathbb{R}^d$, such that
$$
\lim_{n\to+\infty}\frac{1}{n}\log\|\mathcal {S}(n,\sigma)\bx_0(\sigma)\|=\lambda.
$$
So,
$$
\limsup_{n\to+\infty}\|x_n(\bx_0(\sigma),\sigma)\|=0.
$$
This shows that for $\mu$-a.e. $\sigma\in\varSigma_{\bK}^+$, it is nonchaotic for System (\ref{eq2.1}) in the sense of Definition~\ref{def1.1} because of the lack of the distal property $\mathrm{(\ref{eq1.3b})}$. Similarly, if $\mathcal {S}$ has a Lyapunov exponent $\lambda>0$ at $\mu$ then
$$
\liminf_{n\to+\infty}\|x_n(\bx_0(\sigma),\sigma)\|=+\infty\quad \textrm{for }\mu\textrm{-a.e. }\sigma\in\varSigma_{\bK}^+.
$$
So, there is no the proximal property $\mathrm{(\ref{eq1.3a})}$ for $\mu$-a.e. $\sigma\in\varSigma_{\bK}^+$.
This thus completes the proof of Proposition~\ref{prop2.5}.
\end{proof}

An extreme case is the following proposition.

\begin{prop}\label{prop2.6}%%%
Let $\{S_1,\dotsc,S_K\}$ be irreducible; i.e., there is no a common, invariant, nontrivial and proper subspace of $\mathbb{R}^d$ for all $S_1,\dotsc,S_K$. If at every $\theta$-ergodic probability measure, $\mathcal {S}$ has only the maximal Lyapunov exponent $0$, then every $\sigma\in\varSigma_{\bK}^+$ is nonchaotic for System $(\ref{eq2.1})$ in the sense of Definition~\ref{def1.1}.
\end{prop}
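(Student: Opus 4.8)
The plan is to combine the Oseledec/Furstenberg–Kesten machinery with the irreducibility hypothesis to control \emph{all} Lyapunov exponents simultaneously, and then to convert a uniform sub-exponential growth estimate into the boundedness away from $0$ and $+\infty$ that precludes both the proximal and distal properties of Definition~\ref{def1.1}. First I would observe that, since the top Lyapunov exponent at every $\theta$-ergodic measure equals $0$, the subadditive ergodic theorem applied to $n\mapsto\log\|\mathcal{S}(n,\sigma)\|$ gives $\lim_n\frac1n\log\|\mathcal{S}(n,\sigma)\|=0$ for $\mu$-a.e.\ $\sigma$ and every ergodic $\mu$; by the standard variational (ergodic-decomposition) argument this upgrades to the \emph{uniform} statement $\lim_{n\to+\infty}\frac1n\log\sup_{\sigma}\|\mathcal{S}(n,\sigma)\|=0$, i.e.\ the joint spectral radius of $\{S_1,\dotsc,S_K\}$ is $\le1$; arguing the same way with the reversed cocycle (products taken in the opposite order, which has the same set of singular-value exponents) one gets that the joint spectral radius is also $\ge1$, hence the joint spectral radius equals exactly $1$.

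Second I would bring in irreducibility. For an irreducible finite matrix set whose joint spectral radius is $1$, the classical Barabanov/Berger–Wang theory supplies an \emph{extremal norm}: a norm $\vertiii{\cdot}$ on $\mathbb{R}^d$ for which $\vertiii{S_k x}\le\vertiii{x}$ for every $k$ and every $x$. (This is precisely the point where irreducibility is indispensable; without it the joint spectral radius need not be attained by any norm.) With such a norm in hand, for every $\sigma$ and every $x_0$ the sequence $\vertiii{x_n(x_0,\sigma)}$ is nonincreasing, hence bounded above, so in the Euclidean norm $\limsup_n\|x_n(x_0,\sigma)\|<+\infty$; this already destroys the distal property~(\ref{eq1.3b}) for every $\sigma$ and every pair $x_0\ne y_0$, which is all Definition~\ref{def1.1} requires for $\sigma$ to be nonchaotic. (One could equally run the symmetric argument with an extremal norm for the inverses to get the lower bound, but a single extremal norm already suffices.)

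The main obstacle, and the step I would spend the most care on, is the passage from ``at every $\theta$-ergodic measure the top exponent is $0$'' to the uniform bound $\limsup_n\frac1n\log\sup_\sigma\|\mathcal{S}(n,\sigma)\|\le0$, equivalently to identifying the joint spectral radius with the supremum over ergodic measures of the top Lyapunov exponent. The clean way is to invoke the known dictionary between the joint spectral radius and Lyapunov exponents of linear cocycles over the full shift: $\log(\text{joint spectral radius})=\max_{\mu\ \theta\text{-ergodic}}\lambda_{\max}(\mathcal{S},\mu)$, which follows from the Berger–Wang formula together with the fact that periodic orbits of $\theta$ are dense and each carries an ergodic measure, or directly from Schreiber/Sturman–Stark type results on the uniform subadditive ergodic theorem. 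If one prefers a self-contained route, one estimates $\|S_{k_n}\cdots S_{k_1}\|$ for an arbitrary word by splitting it along an empirical measure that is close (in the weak-$*$ topology) to some ergodic measure and using upper semicontinuity of $\mu\mapsto\lambda_{\max}(\mathcal{S},\mu)$; this requires a little care because $\lambda_{\max}$ is only upper semicontinuous, but upper semicontinuity is exactly the direction needed here.

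Finally I would assemble the pieces: joint spectral radius $=1$ (from Oseledec plus the two-sided argument) $\Rightarrow$ existence of an extremal norm (from irreducibility) $\Rightarrow$ $\limsup_n\|x_n(x_0,\sigma)\|<+\infty$ for all $\sigma,x_0$ $\Rightarrow$ no $\sigma$ has the distal property, so by Definition~\ref{def1.1} every $\sigma\in\varSigma_{\bK}^+$ is nonchaotic for System~(\ref{eq2.1}). This completes the proof of Proposition~\ref{prop2.6}.
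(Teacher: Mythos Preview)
Your approach is correct and essentially the same as the paper's: both deduce from the Lyapunov hypothesis that the joint spectral radius is $1$, then invoke irreducibility to obtain product boundedness (you via a Barabanov extremal norm, the paper via Elsner's reduction theorem---equivalent devices), and conclude that the distal property~(\ref{eq1.3b}) fails for every $\sigma$. Your reversed-cocycle step for the lower bound is superfluous, since product boundedness requires only that the joint spectral radius be $\le 1$, but otherwise the two arguments coincide.
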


\begin{proof}
Since the hypothesis of the statement implies that $(\ref{eq2.1})$ has the joint spectral radius $1$, from Elsner's reduction theorem~\cite{El} (also see \cite{Dai-JMAA} for a simple proof) it follows that $(\ref{eq2.1})$ is \textit{product bounded}; that is, there exists a constant $0<\beta<+\infty$ such that
\begin{equation}\label{eq2.4}%%%
\|S_{\sigma(n)}\dotsm S_{\sigma(1)}\|\le\beta\quad\forall\sigma\in\varSigma_{\bK}^+\textrm{ and }n\ge1.
\end{equation}
Thus,
\begin{equation*}
\limsup_{n\to+\infty}\|x_n(x_0,\sigma)\|\le\beta\|x_0\|<+\infty\quad \forall\sigma\in\varSigma_{\bK}^+\textrm{ and }x_0\in\mathbb{R}^d.
\end{equation*}
Thus, there is no the distal property $\mathrm{(\ref{eq1.3b})}$ for each $\sigma\in\varSigma_{\bK}^+$.
This proves the proof of Proposition~\ref{prop2.6}.
\end{proof}

This result also shows that our Definition~\ref{def1.1} is essentially different with Definition~\ref{def2.1} of Balde and Jouan.

%%%%%%%%%%%%%%%%%%%%%%%%%%%%%%%%%%%%%%%%%%%%%%%%%%%%%%%%%%%%%%%%%%%%%%%%%%%
%%%%%%%%%%%%%%%%%%%%%%%%%%%%%%%%%%%%%%%%%%%%%%%%%%%%%%%%%%%%%%%%%%%%%%%%%%%
\section{Chaotic dynamical behaviors}\label{sec3}%%%
Let $S_1,\dotsc,S_K\in\mathrm{GL}(d,\mathbb{R})$ be arbitrarily given. This section will be mainly devoted to proving our main result Theorem~\ref{thm1.2} stated in Section~\ref{sec1}.

For System (\ref{eq1.1}), let $\Lambda$ be the set that consists of the switching laws $\sigma\in\varSigma_{\bK}^+$ such that
\begin{align*}
&\liminf_{n\to+\infty}\|S_{\sigma(n)}\dotsm S_{\sigma(1)}\|=0\\
\intertext{and}
&\limsup_{n\to+\infty}\|S_{\sigma(n)}\dotsm S_{\sigma(1)}\|_{\textrm{co}}=\infty.
\end{align*}
Then each $\sigma\in\Lambda$ is chaotic for System (\ref{eq1.1}) in the sense of Definition~\ref{def1.1}.

To prove our main result Theorem~\ref{thm1.2}, we first need a lemma.

\begin{lem}\label{lem3.1}%%%
Under the context of Theorem~\ref{thm1.2}, $\Lambda$ is a dense subset of $\varSigma_{\bK}^+$.
\end{lem}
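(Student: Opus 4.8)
The plan is to build, for an arbitrary finite prefix, a switching law in $\Lambda$ extending it, thereby establishing density. Fix a cylinder $U = \{\sigma : \sigma(1)=a_1,\dotsc,\sigma(p)=a_p\}$ for some word $(a_1,\dotsc,a_p)\in\bK^p$; I must produce $\sigma\in\Lambda\cap U$. The two words of the hypothesis give a contracting block $B = (i_1,\dotsc,i_m)$ with $\|S_{i_m}\dotsm S_{i_1}\| = \alpha < 1$ and an expanding block $C = (j_1,\dotsc,j_n)$ with $\|S_{j_n}\dotsm S_{j_1}\|_{\mathrm{co}} = \gamma > 1$. The idea is to concatenate the prefix $(a_1,\dotsc,a_p)$ with alternating long runs of copies of $B$ and of $C$, where the run-lengths grow fast enough to overwhelm the bounded multiplicative distortion coming from the other (fixed) matrices and from the prefix.

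The key estimates I would carry out are submultiplicativity bookkeeping. Set $M = \max_k\max\{\|S_k\|,\|S_k^{-1}\|\}$, so $\|\cdot\|_{\mathrm{co}}\ge M^{-(\cdot)}$ on any block of bounded length and products of blocks satisfy $\|XY\|\le\|X\|\,\|Y\|$ and $\|XY\|_{\mathrm{co}}\ge\|X\|_{\mathrm{co}}\|Y\|_{\mathrm{co}}$. If after the prefix I have placed $r_1$ copies of $B$, then $s_1$ copies of $C$, then $r_2$ copies of $B$, etc., then at the end of the $t$-th $B$-phase the full product has norm at most $M^{p}\cdot\alpha^{r_1}\gamma^{\,?}\cdots$ — more precisely I bound $\|S_{\sigma(N)}\dotsm S_{\sigma(1)}\|$ from above by the product over all completed phases, using $\|C\text{-block}\|\le M^{n|C\text{-copies}|}$ for the expanding phases and $\alpha^{r_i}$ for the contracting phases; choosing $r_{t}$ large enough (depending on everything placed before it, i.e. on $s_1,\dotsc,s_{t-1}$ and $r_1,\dotsc,r_{t-1}$) makes this bound $<1/t$, so along the subsequence of phase-endpoints $\|\cdot\|\to 0$, giving $\liminf\|S_{\sigma(n)}\dotsm S_{\sigma(1)}\| = 0$. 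Symmetrically, bounding $\|\cdot\|_{\mathrm{co}}$ from below — using $\gamma^{s_t}$ for the $C$-phase and $(M^{-1})^{m r_i}$ or $\|S_{i_m}\dotsm S_{i_1}\|_{\mathrm{co}}$ for the $B$-phases — and choosing each $s_t$ large enough in terms of the earlier data forces $\|\cdot\|_{\mathrm{co}} > t$ along the $C$-phase endpoints, so $\limsup\|S_{\sigma(n)}\dotsm S_{\sigma(1)}\|_{\mathrm{co}} = +\infty$. Hence $\sigma\in\Lambda$, and since $\sigma$ agrees with the given word on $\{1,\dotsc,p\}$, we have $\sigma\in U$. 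As $U$ was an arbitrary basic open set, $\Lambda$ is dense.

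The main obstacle, and the only place requiring care, is the bookkeeping of the ``distortion'' factors: each time I switch from a run of $B$'s to a run of $C$'s (or back), the two blocks do not multiply cleanly, and the norm of an expanding phase inflates the running product that the next contracting phase must beat down (and vice versa). The resolution is precisely the recursive choice of run-lengths: $r_{t+1}$ is chosen after $r_1,s_1,\dotsc,r_t,s_t$ are fixed, large enough to dominate the finite accumulated factor, and likewise for $s_{t+1}$; since at each stage we only need to beat a finite already-determined quantity, and $\alpha<1$, $\gamma>1$, this is always possible. One should also note $\sigma$ is genuinely in $\varSigma_{\bK}^+$ (it is an infinite sequence, since infinitely many phases are placed) and that $0<\alpha$, $\gamma<\infty$ so the exponents make sense. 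No compactness or Baire argument is needed here — that is deferred to the passage (not in this lemma) from density of $\Lambda$ to residuality of the chaotic switching laws.
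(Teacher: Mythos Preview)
Your proposal is correct and follows essentially the same route as the paper's proof: given an arbitrary finite prefix (equivalently, a point $\bsigma$ and an $\epsilon$-ball around it), one concatenates alternating runs of the contracting word $\bi$ and the expanding word $\bj$, choosing each run-length recursively large enough to force $\|S_{\sigma(N)}\dotsm S_{\sigma(1)}\|<1/t$ at the end of the $t$-th contracting phase and $\|S_{\sigma(N)}\dotsm S_{\sigma(1)}\|_{\mathrm{co}}>t$ at the end of the $t$-th expanding phase. The only cosmetic difference is that the paper phrases density via the metric and an integer $N$ while you phrase it via cylinder sets, and you make the distortion constant $M$ explicit where the paper simply appeals to nonsingularity; the underlying construction is identical.
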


\begin{proof}
Let
$(i_1,\dotsc,i_m)\in\bK^m$ and $(j_1,\dotsc,j_n)\in\bK^n$
be such that
\begin{equation*}
\|S_{i_m}\dotsm S_{i_1}\|<1<\|S_{j_n}\dotsm S_{j_1}\|_{\mathrm{co}}.
\end{equation*}
Simply write
\begin{align*}
&\bi=(i_1,\dotsc,i_m),\quad \bi^k=(\stackrel{k\textrm{-folds}}{\overbrace{\bi,\dotsc,\bi}}),\\
&\bj=(j_1,\dotsc,j_n),\quad \bj^k=(\stackrel{k\textrm{-folds}}{\overbrace{\bj,\dotsc,\bj}}),\\
\intertext{and}
&S(\bi)=S_{i_m}\dotsc S_{i_1},\quad S(\bj)=S_{j_n}\dotsc S_{j_1}.
\end{align*}
Let $\bsigma=(\bsigma(1),\bsigma(2),\dotsc)\in\varSigma_{\bK}^+$ and $\epsilon>0$ be arbitrarily given.
Then one can find an integer $N\ge1$ such that for any
$\sigma\in\varSigma_{\bK}^+$, if $\sigma(1)=\bsigma(1),\dotsc,\sigma(N)=\bsigma(N)$, then the distance $d(\bsigma,\sigma)<\epsilon$.
Set
\begin{equation*}
A=S_{\bsigma(N)}\dotsc S_{\bsigma(1)}.
\end{equation*}
Next, we will construct a chaotic switching law $\sigma\in\varSigma_{\bK}^+$ for System (\ref{eq1.1}) with $d(\bsigma,\sigma)<\epsilon$.

Since all the matrices $S_1,\dotsc,S_K$ are nonsingular, we can choose positive integers $\ell_k<L_k$, for $k=1,2,\dotsc$, such that
\begin{align*}
&\|S(\bi)^{\ell_1}A\|<1,\\
&\|S(\bj)^{L_1}S(\bi)^{\ell_1}A\|_{\textrm{co}}>1;\\
&\|S(\bi)^{\ell_2}S(\bj)^{L_1}S(\bi)^{\ell_1}A\|<\frac{1}{2},\\
&\|S(\bj)^{L_2}S(\bi)^{\ell_2}S(\bj)^{L_1}S(\bi)^{\ell_1}A\|_{\textrm{co}}>2;\\
&\vdots\quad\vdots\quad\vdots\\
&\|S(\bi)^{\ell_k}S(\bj)^{L_{k-1}}\dotsm S(\bj)^{L_1}S(\bi)^{\ell_1}A\|<\frac{1}{k},\\
&\|S(\bj)^{L_k}S(\bi)^{\ell_k}S(\bi)^{L_{k-1}}\dotsm S(\bj)^{L_1}S(\bi)^{\ell_1}A\|_{\textrm{co}}>k;\\
&\vdots\quad\vdots\quad\vdots.
\end{align*}
Now it is easy to see that the switching law $\sigma$ defined by
\begin{equation*}
\sigma=(\bsigma(1),\dotsc,\bsigma(N), \bi^{\ell_1},\bj^{L_1}, \bi^{\ell_2},\bj^{L_2},\bi^{\ell_3},\bj^{L_3},\dotsc)
\end{equation*}
is chaotic for System (\ref{eq1.1}) in the sense of Definition~\ref{def1.1} such that $d(\bsigma,\sigma)<\epsilon$.

This completes the proof of Lemma~\ref{lem3.1}.
\end{proof}

Next, we will prove that $\Lambda$ is a $G_\delta$ subset of $\varSigma_{\bK}^+$; that is, $\Lambda$ is the intersection of countable numbers of open sets.

\begin{lem}\label{lem3.2}%%%
For System $(\ref{eq1.1})$, $\Lambda$ is a $G_\delta$ subset of $\varSigma_{\bK}^+$.
\end{lem}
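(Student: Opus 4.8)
The plan is to express $\Lambda$ as a countable intersection of open sets by unwinding the two defining conditions. Recall that $\sigma\in\Lambda$ means $\liminf_{n\to\infty}\|\mathcal{S}(n,\sigma)\|=0$ and $\limsup_{n\to\infty}\|\mathcal{S}(n,\sigma)\|_{\mathrm{co}}=\infty$, where $\mathcal{S}(n,\sigma)=S_{\sigma(n)}\dotsm S_{\sigma(1)}$. The first condition is equivalent to: for every $p\ge1$ and every $N\ge1$ there exists $n\ge N$ with $\|\mathcal{S}(n,\sigma)\|<1/p$. The second is equivalent to: for every $q\ge1$ and every $N\ge1$ there exists $n\ge N$ with $\|\mathcal{S}(n,\sigma)\|_{\mathrm{co}}>q$. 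So I would write
\begin{equation*}
\Lambda=\bigcap_{p\ge1}\bigcap_{N\ge1}\Bigl(\bigcup_{n\ge N}U_{p,n}\Bigr)\cap\bigcap_{q\ge1}\bigcap_{N\ge1}\Bigl(\bigcup_{n\ge N}V_{q,n}\Bigr),
\end{equation*}
where $U_{p,n}=\{\sigma:\|\mathcal{S}(n,\sigma)\|<1/p\}$ and $V_{q,n}=\{\sigma:\|\mathcal{S}(n,\sigma)\|_{\mathrm{co}}>q\}$.

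The key step is then to observe that each $U_{p,n}$ and each $V_{q,n}$ is open in $\varSigma_{\bK}^+$. This is because $\mathcal{S}(n,\sigma)$ depends only on the first $n$ coordinates $\sigma(1),\dotsc,\sigma(n)$ of $\sigma$; in other words, $\sigma\mapsto\mathcal{S}(n,\sigma)$ is locally constant (continuous with $\mathbb{R}^{d\times d}$ discrete for this purpose), since two switching laws agreeing on the first $n$ entries produce the same matrix product. Because $A\mapsto\|A\|$ and $A\mapsto\|A\|_{\mathrm{co}}$ are continuous on $\mathbb{R}^{d\times d}$, the sets $\{A:\|A\|<1/p\}$ and $\{A:\|A\|_{\mathrm{co}}>q\}$ are open, and their preimages $U_{p,n}$, $V_{q,n}$ under the locally constant map are open (indeed they are unions of cylinder sets determined by the first $n$ coordinates). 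Hence each inner union $\bigcup_{n\ge N}U_{p,n}$ is open, and likewise for the $V$'s. Since $\Lambda$ is a countable intersection of such open sets, it is a $G_\delta$ set.

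I do not anticipate a serious obstacle here; the only point requiring a little care is the translation of the $\liminf=0$ and $\limsup=\infty$ statements into the ``for all thresholds, for all tails, there exists a later index'' quantifier form, and checking that the countable indexing (over $p,q,N$) really does capture the limit conditions exactly — in particular that using $\|\mathcal{S}(n,\sigma)\|<1/p$ rather than $\le\varepsilon$ for a continuum of $\varepsilon$ loses nothing, which holds because $1/p\to0$. Everything else is formal: countable intersections of $G_\delta$ sets are $G_\delta$, and a countable intersection of open sets is by definition $G_\delta$. Combined with Lemma~\ref{lem3.1}, which shows $\Lambda$ is dense, this will give that $\Lambda$ is a dense $G_\delta$, hence residual, completing the proof of Theorem~\ref{thm1.2} once we recall every element of $\Lambda$ is chaotic.
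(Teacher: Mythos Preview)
Your proposal is correct and follows essentially the same approach as the paper: both express $\Lambda$ as a countable intersection of open sets by rewriting the $\liminf$ and $\limsup$ conditions in the ``for all thresholds, for all tails, there exists a later index'' form, and both rely on the fact that $\{\sigma:\|\mathcal{S}(n,\sigma)\|<1/i\}$ and $\{\sigma:\|\mathcal{S}(n,\sigma)\|_{\mathrm{co}}>i\}$ are open since they depend only on the first $n$ coordinates of $\sigma$. Your write-up is slightly more explicit about why these cylinder-type sets are open, but the argument is the same.
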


\begin{proof}
For any positive integer $i$, let
\begin{equation*}
\Lambda_i^s=\left\{\sigma\in\varSigma_{\bK}^+\colon\forall n_0\in\mathbb{N}, \exists n>n_0\textrm{ with }\|S_{\sigma(n)}\dotsm S_{\sigma(1)}\|<\frac{1}{i}\right\}.
\end{equation*}
Then
\begin{equation*}
\Lambda_i^s=\bigcap_{n_0=1}^\infty\bigcup_{n>n_0}\left\{\sigma\in\varSigma_{\bK}^+\colon\|S_{\sigma(n)}\dotsm S_{\sigma(1)}\|<\frac{1}{i}\right\}.
\end{equation*}
Since $\left\{\sigma\in\varSigma_{\bK}^+\colon\|S_{\sigma(n)}\dotsm S_{\sigma(1)}\|<\frac{1}{i}\right\}$ is open in $\varSigma_{\bK}^+$ for every $i$, $\Lambda_i^s$ is a $G_\delta$ set. Thus,
\begin{equation*}
\Lambda^s:=\bigcap_{i=1}^\infty\Lambda_i^s
\end{equation*}
is also a $G_\delta$ set.
On the other hand, let
\begin{equation*}
\Lambda_i^u=\left\{\sigma\in\varSigma_{\bK}^+\colon\forall n_0\in\mathbb{N}, \exists n>n_0\textrm{ with }\|S_{\sigma(n)}\dotsm S_{\sigma(1)}\|_{\textrm{co}}>i\right\}.
\end{equation*}
Then
\begin{equation*}
\Lambda_i^u=\bigcap_{n_0=1}^\infty\bigcup_{n>n_0}\left\{\sigma\in\varSigma_{\bK}^+\colon\|S_{\sigma(n)}\dotsm S_{\sigma(1)}\|_{\textrm{co}}>i\right\}.
\end{equation*}
Moreover
\begin{equation*}
\Lambda^u:=\bigcap_{i=1}^\infty\Lambda_i^u
\end{equation*}
is a $G_\delta$ set.
Therefore, $\Lambda=\Lambda^s\cap\Lambda^u$ is a $G_\delta$ subset of $\varSigma_{\bK}^+$.

This completes the proof of Lemma~\ref{lem3.2}.
\end{proof}

Based on Lemmas~\ref{lem3.1} and \ref{lem3.2} we are now ready to finish the proof of Theorem~\ref{thm1.2}.

\begin{proof}[Proof of Theorem~\ref{thm1.2}]
We easily see that $\Lambda$ is a dense $G_\delta$ subset of $\varSigma_{\bK}^+$ from Lemmas~\ref{lem3.1} and \ref{lem3.2}. Since each $\sigma\in\Lambda$ is chaotic for System (\ref{def1.1}), the set of all chaotic laws of System (\ref{eq1.1}) is residual. This proves Theorem~\ref{thm1.2}.
\end{proof}

Let us consider an example.

\begin{example}
Given any two constants $\alpha,\beta$ such that $|\alpha|<1$ and $|\beta|>1$, let
\begin{equation*}
S_1=\alpha\left(\begin{matrix}1&1\\0&1\end{matrix}\right)\quad \textrm{and}\quad S_2=\beta\left(\begin{matrix}1&0\\1&1\end{matrix}\right).
\end{equation*}
Then from Theorem~\ref{thm1.2}, it follows that System (\ref{eq1.1}) generated by $S_1$ and $S_2$ is chaotic in the sense of Definition~\ref{def1.1}.
\end{example}

We now turn to another basic property of chaotic systems.

\begin{defn}
System $(\ref{eq1.1})$ is called \textit{irreducibly product unbounded} if restricted to every nonempty, common and invariant subspace of $\mathbb{R}^d$, it is product unbounded.
\end{defn}

So, if System $(\ref{eq1.1})$ is irreducibly product unbounded then it is product unbounded. But the converse is not necessarily true. For example, for
\begin{equation*}
S=\left(\begin{matrix}1&1\\0&1\end{matrix}\right)
\end{equation*}
it is product unbounded but not irreducibly product unbounded.

We will employ the following simple fact in the next section.

\begin{lem}\label{lem3.5}%%%
If System $(\ref{eq1.1})$ has a chaotic switching law in the sense of Definition~\ref{def1.1}, then it is irreducibly product unbounded.
\end{lem}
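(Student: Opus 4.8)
The plan is to argue by contraposition: if System~(\ref{eq1.1}) is \emph{not} irreducibly product unbounded, then there exists a nonempty common invariant subspace $V\subseteq\mathbb{R}^d$ on which the restricted system is product bounded, and I will show no switching law can then be chaotic. So suppose $V$ is such a subspace and let $0<\beta<+\infty$ be a constant with $\|S_{\sigma(n)}\dotsm S_{\sigma(1)}|_V\|\le\beta$ for all $\sigma\in\varSigma_{\bK}^+$ and all $n\ge1$, where $S_k|_V$ denotes the restriction of $S_k$ to $V$ (well-defined since $V$ is invariant).

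First I would pick any nonzero $x_0\in V$. By invariance of $V$, the whole orbit $x_n(x_0,\sigma)=S_{\sigma(n)}\dotsm S_{\sigma(1)}x_0$ stays in $V$, and by product boundedness on $V$ we get $\|x_n(x_0,\sigma)\|\le\beta\|x_0\|<+\infty$ for every $n\ge1$ and every $\sigma\in\varSigma_{\bK}^+$. Hence $\limsup_{n\to+\infty}\|x_n(x_0,\sigma)\|\le\beta\|x_0\|<+\infty$ for all $\sigma$. Second, I would invoke the reformulation of chaos recorded in~(\ref{eq1.3a})--(\ref{eq1.3b}), or more directly Definition~\ref{def1.1}: a chaotic switching law must satisfy $\limsup_{n\to+\infty}\|x_n(x_0,\sigma)\|=+\infty$ for \emph{every} nonzero $x_0\in\mathbb{R}^d$, in particular for our chosen $x_0\in V\setminus\{0\}$. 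This contradicts the bound just established. Therefore, if the system is not irreducibly product unbounded, it has no chaotic switching law; equivalently, the existence of one chaotic switching law forces irreducible product unboundedness.

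The argument is short and the only mild subtlety — the main thing to get right — is the bookkeeping around the restricted system: one must note that a common invariant subspace $V$ makes each $S_k|_V$ a well-defined linear map on $V$, that the norm on $V$ inherited from $\mathbb{R}^d$ is what appears in the definition of ``product bounded restricted to $V$'', and that one genuinely only needs product boundedness on $V$ (not on all of $\mathbb{R}^d$) to kill the distal property~(\ref{eq1.3b}) for vectors inside $V$. Since $V$ is nonempty it contains a nonzero vector, so the contradiction is never vacuous. No obstacle of substance is expected; this lemma is the ``simple fact'' promised in the text, and its proof is essentially a one-line consequence of Definition~\ref{def1.1} together with unwinding the definition of irreducible product unboundedness.
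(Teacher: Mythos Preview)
Your proposal is correct and is precisely the unpacking of the definitions that the paper's one-line proof (``This follows immediately from the definitions'') gestures at. The only trivial caveat is that ``nonempty \dots\ subspace'' in the paper's Definition of irreducible product unboundedness must be read as ``nonzero subspace'' (otherwise $V=\{0\}$ would make the notion vacuous), so your clause ``since $V$ is nonempty it contains a nonzero vector'' is fine under the intended reading.
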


\begin{proof}
This follows immediately from the definitions.
\end{proof}
%%%%%%%%%%%%%%%%%%%%%%%%%%%%%%%%%%%%%%%%%%%%%%%%%%%%%%%%%%%
%%%%%%%%%%%%%%%%%%%%%%%%%%%%%%%%%%%%%%%%%%%%%%%%%%%%%%%%%%%%
\section{Periodical stability implies nonchaoticity }\label{sec4}%%%
Recall that System $(\ref{eq2.1})$ described as in Section~\ref{sec2} is called, from e.g. \cite{Gur-LAA95, SWMWK, DHX-aut}, \textit{periodically stable} if for any finite-length words $(k_1,\dotsc,k_\pi)\in\bK^\pi, \pi\ge1$, there holds that the spectral radius $\rho_{S_{k_\pi}\dotsm S_{k_1}}^{}$ of $S_{k_\pi}\dotsm S_{k_1}$ is less than $1$. Then a periodically stable system $(\ref{eq2.1})$ does not need to be absolutely stable
from \cite{BM, BTV, Koz07, HMST}; but it is almost surely exponentially stable in terms of ergodic measures, see \cite{DHX-aut} and \cite[Theorem~C$^\prime$]{Dai-JDE}. The following result further shows that a periodically stable system has no chaotic dynamics in our sense of Definition~\ref{def1.1}.

\begin{theorem}\label{thm4.1}%%%
If System $(\ref{eq2.1})$ has the Rota-Strang joint spectral radius
\begin{equation*}
\pmb{\rho}:=\lim_{n\to+\infty}\max_{\sigma\in\varSigma_{\bK}^+}\sqrt[n]{\|S_{\sigma(n)}\dotsm S_{\sigma(1)}\|}=1,
\end{equation*}
then its every switching law is not chaotic in the sense of Definition~\ref{def1.1}.
\end{theorem}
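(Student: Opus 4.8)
The plan is to show that the Rota--Strang joint spectral radius being equal to $1$ forces an upper bound on all products, thereby ruling out the distal property $(\ref{eq1.3b})$ for every switching law. First I would recall that the limit defining $\pmb\rho$ is in fact a supremum: by the standard submultiplicativity argument (Fekete's lemma applied to $a_n=\log\max_{\sigma}\|S_{\sigma(n)}\dotsm S_{\sigma(1)}\|$, which is subadditive), one has $\|S_{\sigma(n)}\dotsm S_{\sigma(1)}\|^{1/n}\le\pmb\rho=1$ for \emph{all} $\sigma\in\varSigma_{\bK}^+$ and all $n\ge1$. Consequently $\|S_{\sigma(n)}\dotsm S_{\sigma(1)}\|\le1$ for every $\sigma$ and every $n$, so the cocycle $\mathcal S$ is trivially product bounded with bound $\beta=1$.

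Given product boundedness, the argument is then essentially the one already used in the proof of Proposition~\ref{prop2.6}: for any $\sigma\in\varSigma_{\bK}^+$ and any $x_0\in\mathbb{R}^d$,
\begin{equation*}
\|x_n(x_0,\sigma)\|=\|S_{\sigma(n)}\dotsm S_{\sigma(1)}x_0\|\le\|S_{\sigma(n)}\dotsm S_{\sigma(1)}\|\,\|x_0\|\le\|x_0\|,
\end{equation*}
so $\limsup_{n\to+\infty}\|x_n(x_0,\sigma)-x_n(y_0,\sigma)\|=\limsup_{n\to+\infty}\|x_n(x_0-y_0,\sigma)\|\le\|x_0-y_0\|<+\infty$. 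Hence the distal property $(\ref{eq1.3b})$ fails for every pair $x_0\ne y_0$, and therefore no $\sigma$ is chaotic in the sense of Definition~\ref{def1.1}. This would complete the proof; one can also note that Corollary~\ref{cor4.2} follows since periodic stability implies $\pmb\rho\le1$ (indeed $\pmb\rho=1$ in the nontrivial case), via the Berger--Wang / periodic-products characterization of the joint spectral radius.

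The only point requiring a little care is the passage from ``the limit equals $1$'' to ``every finite product has norm $\le1$.'' The subtlety is that a priori $\pmb\rho$ is only a limit, so individual products could temporarily exceed $1$; the subadditivity/Fekete argument is exactly what rules this out, giving $\max_\sigma\|S_{\sigma(n)}\dotsm S_{\sigma(1)}\|\le\pmb\rho^{\,n}=1$ for each fixed $n$. I expect this to be the main (indeed the only) nontrivial step; once it is in place the rest is immediate from product boundedness exactly as in Proposition~\ref{prop2.6}. If one prefers to avoid even this, one may instead invoke the well-known fact that $\pmb\rho=1$ together with boundedness-type reduction results (as cited for Proposition~\ref{prop2.6}) already yields product boundedness directly.
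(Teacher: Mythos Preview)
Your proof contains a genuine error at exactly the step you flagged as ``the only point requiring a little care.'' Fekete's lemma applied to the subadditive sequence $a_n=\log\max_{\sigma}\|S_{\sigma(n)}\dotsm S_{\sigma(1)}\|$ yields $\lim_n a_n/n=\inf_n a_n/n$, hence $a_n/n\ge\log\pmb\rho$ for every $n$. This is the \emph{opposite} inequality to the one you assert: one gets $\max_\sigma\|S_{\sigma(n)}\dotsm S_{\sigma(1)}\|^{1/n}\ge\pmb\rho$, not $\le\pmb\rho$. Consequently $\pmb\rho=1$ does \emph{not} force all products to have norm $\le1$, and in fact does not imply product boundedness at all. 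The single Jordan block
\[
S_1=\begin{pmatrix}1&1\\0&1\end{pmatrix}
\]
already has $\pmb\rho=1$ while $\|S_1^n\|\to\infty$. Your fallback suggestion---invoking the reduction results cited for Proposition~\ref{prop2.6}---does not rescue this either: Elsner's theorem gives product boundedness only under \emph{irreducibility}, which is not assumed here.

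The paper's argument is accordingly more delicate. It does not attempt to prove product boundedness globally. Instead it argues by contradiction: a chaotic switching law would, by Lemma~\ref{lem3.5}, make the system irreducibly product unbounded (i.e.\ product unbounded on every common invariant subspace). But $\pmb\rho=1$ together with product unboundedness forces, via Elsner's reduction theorem, the existence of a proper common invariant subspace; the restriction to that subspace again has joint spectral radius $\le1$ and inherits the same chaotic law. Iterating reduces dimension until one reaches an irreducible block, which by Elsner must then be product bounded---contradicting irreducible product unboundedness. The key ingredients you are missing are thus Lemma~\ref{lem3.5} (to transport chaoticity to invariant subspaces) and the Elsner reduction / dimension-induction scheme; product boundedness of the full system is simply false in general and cannot be the mechanism.
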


\begin{proof}
According to Definition~\ref{def1.1}, if System (\ref{eq2.1}) is product bounded as in (\ref{eq2.4}), then it does not have any chaotic switching laws.
By contradiction, we let $\bsigma\in\varSigma_{\bK}^+$ be chaotic for System (\ref{eq2.1}) in the sense of Definition~\ref{def1.1}. Then System (\ref{eq2.1}) is irreducibly product unbounded by Lemma~\ref{lem3.5}. From Elsner's reduction theorem \cite{El, Dai-JMAA}, there is no loss of generality in assuming
\begin{equation*}
S_k=\begin{pmatrix}S_k^{1,1}&*\\0&S_k^{2,2}\end{pmatrix},\quad k=1,\dotsc,K,
\end{equation*}
such that
\begin{equation*}
S_k^{1,1}\in\mathbb{R}^{d_1\times d_1}\quad\textrm{and}\quad S_k^{2,2}\in\mathbb{R}^{(d-d_1)\times (d-d_1)},\quad k=1,\dotsc,K,
\end{equation*}
for some integer $1\le d_1<d$. Clearly, the inclusion system based on $\{S_1^{1,1}, \dotsc, S_K^{1,1}\}$ is also periodically stable and moreover, $\pmb{\sigma}$ is a chaotic switching law for it too. Repeating this argument finite times, we can conclude a contradiction to the irreducible product unboundedness.

This completes the proof of Theorem~\ref{thm4.1}.
\end{proof}

\begin{cor}\label{cor4.2}%%%
If System $(\ref{eq2.1})$ is periodically stable, then its every switching law is not chaotic in the sense of Definition~\ref{def1.1}.
\end{cor}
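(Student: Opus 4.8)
The goal is to deduce Corollary~\ref{cor4.2} from Theorem~\ref{thm4.1}, so the plan is essentially to verify that periodical stability forces the Rota--Strang joint spectral radius $\pmb{\rho}$ to equal $1$, and then invoke Theorem~\ref{thm4.1} directly. First I would recall the classical Berger--Wang identity, which identifies the joint spectral radius $\pmb{\rho}$ with the \emph{generalized spectral radius} $\lim_{n\to+\infty}\sup_{(k_1,\dotsc,k_n)\in\bK^n}\rho_{S_{k_n}\dotsm S_{k_1}}^{1/n}$; under the periodical stability hypothesis every finite-length product $S_{k_\pi}\dotsm S_{k_1}$ has $\rho_{S_{k_\pi}\dotsm S_{k_1}}^{}<1$, so each term in the defining supremum is $<1$ and hence $\pmb{\rho}\le1$. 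For the reverse inequality I would simply note that taking $\pi=1$ gives $\rho_{S_k}^{}<1$ is \emph{not} what we need directly; rather, we need only that $\pmb{\rho}$ cannot be $<1$. But in fact $\pmb{\rho}<1$ would make System~$(\ref{eq2.1})$ absolutely (exponentially) stable, which is consistent with nonchaoticity anyway. So the cleanest route is: periodical stability $\Rightarrow\pmb{\rho}\le1$ via Berger--Wang; if $\pmb{\rho}<1$ then the system is product bounded (indeed contractive) and Definition~\ref{def1.1} is violated for lack of the distal property; if $\pmb{\rho}=1$ then Theorem~\ref{thm4.1} applies. Either way no switching law is chaotic.

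More concretely, the key steps in order are: (1) State and use Berger--Wang to get $\pmb{\rho}=\lim_n\sup_{w\in\bK^n}\rho_{S(w)}^{1/n}$ where $S(w)=S_{k_n}\dotsm S_{k_1}$ for $w=(k_1,\dotsc,k_n)$. (2) From periodical stability, $\rho_{S(w)}^{}<1$ for every finite word $w$, hence $\sup_{w\in\bK^n}\rho_{S(w)}^{1/n}\le1$ for every $n$, so $\pmb{\rho}\le1$. (3) Split into two cases. If $\pmb{\rho}<1$, then by definition of $\pmb{\rho}$ there is $n_0$ and $q<1$ with $\|S_{\sigma(n)}\dotsm S_{\sigma(1)}\|\le q^{n}$ for all $n\ge n_0$ and all $\sigma$, so $\limsup_{n}\|x_n(x_0,\sigma)\|=0$ for every $x_0$ and every $\sigma$, violating~$(\ref{eq1.3b})$; hence no $\sigma$ is chaotic. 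If $\pmb{\rho}=1$, apply Theorem~\ref{thm4.1} verbatim. (4) Combine the two cases to conclude.

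The main obstacle is really step (2) combined with the subtlety that Berger--Wang gives equality only in the limit, not term by term --- one must be careful that $\rho_{S(w)}^{}<1$ for each individual $w$ does not by itself bound $\pmb{\rho}$ without the limiting characterization, since $\rho_{S(w)}^{1/|w|}$ could a priori approach $1$ from below as $|w|\to\infty$ (which is exactly why periodically stable systems need not be absolutely stable). But that is fine: the supremum over words of fixed length $n$ is still $\le1$, so the $\limsup$ is $\le1$, giving $\pmb{\rho}\le1$ as needed; we never need strict inequality in the limit. An alternative, if one prefers to avoid Berger--Wang, is to argue directly: if $\pmb{\rho}>1$ then by Elsner's reduction theorem (as in the proof of Theorem~\ref{thm4.1}) one can reduce to an irreducibly product-unbounded block, and irreducibility plus joint spectral radius $>1$ contradicts periodical stability because an irreducible product-unbounded family with joint spectral radius $>1$ admits a finite product with spectral radius $>1$ (this is the finiteness-type argument underlying Theorem~\ref{thm4.1}); hence $\pmb{\rho}\le1$ and we finish as before. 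I would present the Berger--Wang route as the primary argument since it is shortest.

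\begin{proof}[Proof of Corollary~\ref{cor4.2}]
By the Berger--Wang formula, the Rota--Strang joint spectral radius of System~$(\ref{eq2.1})$ coincides with the generalized spectral radius:
\begin{equation*}
\pmb{\rho}=\lim_{n\to+\infty}\sup_{(k_1,\dotsc,k_n)\in\bK^n}\bigl(\rho_{S_{k_n}\dotsm S_{k_1}}^{}\bigr)^{1/n}.
\end{equation*}
If System~$(\ref{eq2.1})$ is periodically stable, then $\rho_{S_{k_n}\dotsm S_{k_1}}^{}<1$ for every $n\ge1$ and every word $(k_1,\dotsc,k_n)\in\bK^n$, so $\sup_{(k_1,\dotsc,k_n)\in\bK^n}(\rho_{S_{k_n}\dotsm S_{k_1}}^{})^{1/n}\le1$ for all $n$, whence $\pmb{\rho}\le1$. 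We distinguish two cases. If $\pmb{\rho}<1$, then there exist $n_0\ge1$ and $0<q<1$ with $\|S_{\sigma(n)}\dotsm S_{\sigma(1)}\|\le q^{n}$ for all $\sigma\in\varSigma_{\bK}^+$ and all $n\ge n_0$; consequently $\limsup_{n\to+\infty}\|x_n(x_0,\sigma)\|=0$ for every $x_0\in\mathbb{R}^d$ and every $\sigma\in\varSigma_{\bK}^+$, so the distal property $\mathrm{(\ref{eq1.3b})}$ fails and no switching law is chaotic in the sense of Definition~\ref{def1.1}. If $\pmb{\rho}=1$, then Theorem~\ref{thm4.1} directly yields that no switching law of System~$(\ref{eq2.1})$ is chaotic in the sense of Definition~\ref{def1.1}. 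In either case the conclusion holds, which completes the proof.
\end{proof}
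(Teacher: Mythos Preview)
Your proof is correct and follows essentially the same approach as the paper: the paper's own proof is a one-line citation of the Berger--Wang spectral formula together with Theorem~\ref{thm4.1}, and your argument spells out precisely the case split $\pmb{\rho}<1$ versus $\pmb{\rho}=1$ that makes that citation work. The only cosmetic difference is that you handle the $\pmb{\rho}<1$ case by direct exponential contraction, whereas the paper tacitly absorbs it into the product-boundedness remark at the start of the proof of Theorem~\ref{thm4.1}.
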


\begin{proof}
This comes from the Berger-Wang spectral formula~\cite{BW} and \cite{Bar} and Theorem~\ref{thm4.1}.
\end{proof}

This shows that a periodically stable inclusion system
is ``simple'' from our viewpoint of chaos dynamics. In fact, the following Lemma~\ref{lem4.3} shows a low dimensional periodically stable system is product bounded.

It is a well-known fact that for System (\ref{eq2.1}), if it holds that $\rho_{S_{\sigma(n)}\dotsm S_{\sigma(1)}}^{}\le1$ for all $\sigma\in\varSigma_{\bK}^+$, then
\begin{equation*}
\max_{\sigma\in\varSigma_{\bK}^+}\|S_{\sigma(n)}\dotsm S_{\sigma(1)}\|=\mathrm{O}(n^{d-1});
\end{equation*}
see, e.g., \cite{BW, Bel}. In the periodically stable case (or equivalently, $\rho_{S_{\sigma(n)}\dotsm S_{\sigma(1)}}^{}<1\;\forall \sigma\in\varSigma_{\bK}^+$ and $n\ge1$), we can get a more subtle estimate as follows.

\begin{lem}\label{lem4.3}%%%
Let System $(\ref{eq2.1})$ be periodically stable with dimension $d\ge2$. Then
\begin{equation*}
\max_{\sigma\in\varSigma_{\bK}^+}\|S_{\sigma(n)}\dotsm S_{\sigma(1)}\|=\mathrm{O}(n^{\lfloor d/2-1\rfloor}).
\end{equation*}
Here $\lfloor x\rfloor$ represents the largest integer which is not greater than
$x$ for any $x\ge0$.

In particular, if $1\le d\le3$ then System $(\ref{eq2.1})$ is product
bounded in $\mathbb{R}^{d\times d}$; if $4\le d\le 5$ then $\|S_{\sigma(n)}\dotsm S_{\sigma(1)}\|$ is at most linearly increasing.
\end{lem}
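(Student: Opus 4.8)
The plan is to estimate the growth rate $\max_{\sigma}\|\mathcal{S}(n,\sigma)\|$ by combining the known $\mathrm{O}(n^{d-1})$ bound (valid whenever $\rho_{S_{\sigma(n)}\dotsm S_{\sigma(1)}}^{}\le 1$ for all $\sigma$) with an Elsner-type block-triangular reduction that exploits the stronger \emph{strict} spectral-radius bound available in the periodically stable case. The key observation is that periodic stability gives joint spectral radius exactly $1$ (by Berger--Wang), so that the system is \emph{on the boundary} but each individual product is strictly contractive in spectral radius; polynomial growth of degree $d-1$ is only attained when a single matrix has a nontrivial Jordan block at an eigenvalue of modulus $1$, which is incompatible with $\rho<1$ for every product. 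The gain from $n^{d-1}$ down to $n^{\lfloor d/2-1\rfloor}$ should come from decomposing $\mathbb{R}^d$ into an invariant flag whose successive quotients are either product bounded or of strictly smaller polynomial growth, and arguing that each ``doubly indecomposable'' block of size $\ell$ contributes at most $n^{\lfloor \ell/2-1\rfloor}$.

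Concretely, first I would reduce, via Elsner's reduction theorem \cite{El, Dai-JMAA}, to the case where $\{S_1,\dotsc,S_K\}$ is irreducible; irreducibility together with joint spectral radius $1$ forces product boundedness, which handles the base of the induction. Then I would set up an induction on $d$: writing each $S_k$ in block-upper-triangular form with diagonal blocks $S_k^{1,1}$ (size $d_1$) and $S_k^{2,2}$ (size $d-d_1$), each diagonal sub-inclusion is again periodically stable, and the product $\mathcal{S}(n,\sigma)$ has its off-diagonal block given by a ``twisted sum'' $\sum_{r} (\text{upper product})\,(*_{\sigma(r)})\,(\text{lower product})$ of length $\mathrm{O}(n)$ terms. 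If by induction the two diagonal growth rates are $\mathrm{O}(n^{a})$ and $\mathrm{O}(n^{b})$, the off-diagonal block grows like $\mathrm{O}(n^{a+b+1})$, and one checks $a+b+1\le \lfloor d/2-1\rfloor$ when $d_1$ and $d-d_1$ are chosen at a worst-case split; the $\lfloor\cdot\rfloor$ and the ``$-1$'' arise precisely because a block of odd size or of size $\le 3$ cannot actually realize the naive bound (this is where the irreducible product-bounded base case with its genuinely constant, rather than merely $\mathrm{O}(1)$, growth is used, and where the strictness $\rho<1$—as opposed to $\rho\le 1$—pays off). The final two assertions are then immediate arithmetic: $\lfloor d/2-1\rfloor=0$ for $d\in\{2,3\}$ giving product boundedness, $\lfloor d/2-1\rfloor=1$ for $d\in\{4,5\}$ giving at most linear growth (the $d=1$ case being trivial since a single scalar of modulus $<1$ is contractive).

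I would present this by first recalling the cited $\mathrm{O}(n^{d-1})$ estimate \cite{BW, Bel}, then stating and proving the inductive claim as a short lemma: ``if an irreducible, periodically stable inclusion on $\mathbb{R}^\ell$ is product unbounded then $\ell\ge 4$, and in general the growth exponent on $\mathbb{R}^\ell$ is at most $\lfloor \ell/2-1\rfloor$.'' The induction step is the block-triangular estimate above; the base of the strengthened claim ($\ell\le 3 \Rightarrow$ product bounded) is the delicate part and deserves its own argument using that every $2\times 2$ or $3\times 3$ product has spectral radius $<1$ together with, e.g., the Berger--Wang formula and a compactness/normal-families argument on the finite products of length a period, ruling out an unbounded orbit by extracting a limiting product of spectral radius $\ge 1$.

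\medskip

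\emph{Main obstacle.} The hard part will be the sharp base case and, relatedly, getting the floor function exactly right rather than a crude $\mathrm{O}(n^{d-2})$: one must show that the off-diagonal coupling cannot stack growth additively in the worst imaginable way, which requires controlling not just the norms of the diagonal products but their \emph{simultaneous} behavior along the same long switching word $\sigma$ (the contractive diagonal factors must damp the growing ones). I expect this to hinge on an Elsner-type normal form in which the truly ``resonant'' directions are isolated, combined with the strict inequality $\rho_{S_{\sigma(n)}\dotsm S_{\sigma(1)}}^{}<1$ to exclude the marginal Jordan-block scenario that would otherwise saturate the $n^{d-1}$ bound.
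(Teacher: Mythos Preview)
Your outline is essentially the paper's proof: induction on $d$, Elsner's reduction to block-triangular form when the system is not product bounded, and the off-diagonal ``twisted sum'' estimate. The arithmetic inequality you anticipate,
\[
\left\lfloor\frac{d_1}{2}-1\right\rfloor+\left\lfloor\frac{d-d_1}{2}-1\right\rfloor+1\le\left\lfloor\frac{d}{2}-1\right\rfloor,
\]
is exactly what the paper uses---but only when \emph{both} blocks have size $\ge 2$. The point you are circling around in your ``main obstacle'' paragraph is handled in the paper by a clean two-case split: when one block is $1$-dimensional, that scalar sub-inclusion decays like $\gamma^n$ for some $\gamma<1$ (this is where strict periodic stability, as opposed to $\rho\le 1$, is used), and the twisted sum becomes a geometric series rather than a sum of $n$ comparable terms; this immediately gives $\|\spadesuit\|=\mathrm{O}(n^{\lfloor(d-1)/2-1\rfloor})\le \mathrm{O}(n^{\lfloor d/2-1\rfloor})$ without an extra factor of $n$.

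Once you make that split explicit, your worry about a ``delicate'' separate base case $\ell\le 3$ evaporates: the induction starts at $d=1$ (exponential decay), and the $d=2,3$ product-boundedness drops out of the inductive step via the geometric-series case, with no need for a compactness or normal-families argument. So your plan is correct, but you are manufacturing a difficulty that the paper avoids simply by anchoring the induction at $d=1$ rather than at ``irreducible $\Rightarrow$ bounded.''
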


\begin{proof}
We will prove the statement by induction on the dimension $d$ of System (\ref{eq2.1}). We first notice that if System (\ref{eq2.1}) is periodically stable with dimension $d=1$, then
there exists a constant $0<\gamma<1$
so that
$$\|S_{\sigma(n)}\dotsm S_{\sigma(1)}\|\le\gamma^n$$
for all $\sigma\in\varSigma_{\bK}^+$ and all $n\ge1$.

Let $m\ge2$ be arbitrarily given. Assume that the statement is true for $d<m$. It suffices to claim that the
statement is also true for $d=m$.

Let $d=m$. The periodical stability of System (\ref{eq2.1}) implies that the joint spectral radius $\pmb{\rho}\le
1$. If System (\ref{eq2.1}) is product bounded then we are done. Otherwise, according to Elsner's reduction theorem we can assume that each $S_k$ has the form
\begin{equation*}
S_k=\begin{pmatrix}S_k^{1,1}& B_k\\ 0& S_k^{2,2}\end{pmatrix},\quad k=1,\dotsc,K,
\end{equation*}
where
$\left\{S_1^{1,1},\dotsc,S_K^{1,1}\right\}\subset\mathbb{R}^{d_1\times
d_1}$, $\{B_1,\dotsc,B_K\}\subset\mathbb{R}^{d_1\times(m-d_1)}$, and
$\left\{S_1^{2,2},\dotsc,S_K^{2,2}\right\}\subset\mathbb{R}^{(m-d_1)\times
(m-d_1)}$ for some $1\leq d_1<m$.
Thus, for any $\sigma\in\varSigma_{\bK}^+$ and $n\ge1$
\begin{equation*}
S_{\sigma(n)}\dotsc S_{\sigma(1)}=\begin{pmatrix}S_{\sigma(n)}^{1,1}\dotsm S_{\sigma(1)}^{1,1}&\spadesuit_{\sigma(n)\dotsm\sigma(1)}\\
0& S_{i_1}^{(2,2)}\cdots
S_{\sigma(n)}^{2,2}\dotsm S_{\sigma(1)}^{2,2}\end{pmatrix}
\end{equation*}
where
\begin{equation*}
\spadesuit_{\sigma(n)\dotsm\sigma(1)}=\sum_{j=1}^{n}S_{\sigma(n)}^{1,1}\dotsm S_{\sigma(j+1)}^{1,1}B_{\sigma(j-1)}S_{\sigma(n)}^{2,2}\dotsc S_{\sigma(1)}^{2,2}.
\end{equation*}
We can choose a constant $C_1>0$ such that
\begin{equation*}
\|B_k\|\le C_1\qquad\forall k=1,\dotsc,K.
\end{equation*}
Now we only need to consider the following two cases.

Case I: When $d_1=1$ or $m-d_1=1$, we can obtain either
\begin{equation*}
\|S_k^{1,1}\|\le\gamma<1\quad\textrm{for }1\le k\le K
\end{equation*}
or
\begin{equation*}
\|S_k^{2,2}\|\le\gamma<1\quad\textrm{for }1\le k\le K,
\end{equation*}
for some constant $0<\gamma<1$. Hence we have
\begin{equation*}
\|\spadesuit_{\sigma(n)\dotsm\sigma(1)}\|\leq
\begin{cases}
C_1C& \textrm{if }m=2;\\
C_1Cn^{\lfloor m/2-1\rfloor}(1+\gamma+\dotsm+\gamma^{n-1})& \textrm{if }m>2,
\end{cases}
\end{equation*}
by the induction assumption, for some constant $C>0$ that is independent of the choices of the switching law $\sigma$. Thus the statement holds in this case.

Case II: When $2\leq d_1<m-1$, according to the induction assumption, it
follows that
\begin{equation*}
\|\spadesuit_{\sigma(n)\dotsm\sigma(1)}\|\leq
C_1Cn^{\lfloor d_1/2-1\rfloor}Cn^{\lfloor(m-d_1)/2-1\rfloor}n
\leq C_1C^2n^{[m/2-1]}.
\end{equation*}
Here we have used the following inequality:
\begin{equation*}
\left\lfloor\frac{d_1}{2}-1\right\rfloor+\left\lfloor\frac{m-d_1}{2}-1\right\rfloor+1\leq \left\lfloor\frac{m}{2}-1\right\rfloor,
\end{equation*}
for any $2\leq d_1<m-1$, which implies the desired result.

This completes the proof of Lemma~\ref{lem4.3}.
\end{proof}

This lemma together with Lemma~\ref{lem2.3} implies the following stability result.

\begin{theorem}\label{thm4.4}%%%
Let System $(\ref{eq2.1})$ be periodically stable with dimension $2\le d\le 3$. Then,
\begin{equation*}
\|S_{\sigma(n)}\dotsm S_{\sigma(1)}\|\to0 \quad \textrm{as }n\to+\infty,
\end{equation*}
for every Balde-Jouan nonchaotic switching laws $\sigma\in\varSigma_{\bK}^+$.
\end{theorem}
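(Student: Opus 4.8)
\textbf{Proof proposal for Theorem~\ref{thm4.4}.}
The plan is to combine the structural dichotomy for periodically stable low-dimensional systems (Lemma~\ref{lem4.3}) with the ``long constant block'' property of Balde-Jouan nonchaotic switching laws (Lemma~\ref{lem2.3}). First I would invoke Lemma~\ref{lem4.3}: since $2\le d\le 3$, every periodically stable System~(\ref{eq2.1}) is product bounded, so there is a constant $\beta>0$ with $\|S_{\sigma(n)}\dotsm S_{\sigma(1)}\|\le\beta$ for all $\sigma$ and $n$. This uniform bound is what makes a purely combinatorial argument about the switching law decisive. Next, given a Balde-Jouan nonchaotic $\sigma$, apply Lemma~\ref{lem2.3} to produce an alphabet $k\in\{1,\dotsc,K\}$ such that arbitrarily long constant-$k$ blocks occur arbitrarily late in $\sigma$.

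The key step is then a quantitative decay estimate driven by these long blocks. Periodical stability forces $\rho_{S_k}<1$ (take the word of length one), so by the Gel'fand formula there are constants $C>0$ and $0<\gamma<1$ with $\|S_k^\ell\|\le C\gamma^\ell$ for all $\ell\ge1$. Now fix $\eps>0$; choose $\ell$ with $C\gamma^\ell\beta<\eps/\beta$, equivalently so that inserting a block of $S_k$'s of length $\ell$ multiplied on the left of any product of norm $\le\beta$ and then continuing (again with norm control $\le\beta$) yields norm $<\eps$. By Lemma~\ref{lem2.3} pick $n_\ell$, as large as we please, with $\sigma(n_\ell+1)=\dotsm=\sigma(n_\ell+\ell)=k$. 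For every $n\ge n_\ell+\ell$ I would factor
\begin{equation*}
S_{\sigma(n)}\dotsm S_{\sigma(1)}=\bigl(S_{\sigma(n)}\dotsm S_{\sigma(n_\ell+\ell+1)}\bigr)\,S_k^{\ell}\,\bigl(S_{\sigma(n_\ell)}\dotsm S_{\sigma(1)}\bigr),
\end{equation*}
and bound the outer two factors by $\beta$ (product boundedness) and the middle factor by $C\gamma^\ell$, giving $\|S_{\sigma(n)}\dotsm S_{\sigma(1)}\|\le\beta^2 C\gamma^\ell$ for all $n\ge n_\ell+\ell$. Choosing $\ell$ so that $\beta^2 C\gamma^\ell<\eps$ shows $\|S_{\sigma(n)}\dotsm S_{\sigma(1)}\|<\eps$ eventually, hence the product tends to $0$.

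The main obstacle is making sure the decay is \emph{permanent} rather than just occurring along a subsequence: a single long $S_k$-block makes the norm small at that moment, but subsequent matrices could in principle re-inflate it. This is exactly where product boundedness from Lemma~\ref{lem4.3} is essential---the tail factor $S_{\sigma(n)}\dotsm S_{\sigma(n_\ell+\ell+1)}$ has norm at most $\beta$ uniformly in $n$, so once the norm has been driven below $\eps/\beta^2\cdot$(the gap) it stays below $\eps$ forever after. Thus the argument is: product boundedness (so nothing blows up) plus one sufficiently long late block of a contraction $S_k$ (so everything collapses and then stays collapsed). I would also remark that the restriction $d\le 3$ enters only through product boundedness in Lemma~\ref{lem4.3}; for $d\ge4$ the polynomial growth factor $n^{\lfloor d/2-1\rfloor}$ cannot be absorbed this way, so a different argument (or a counterexample) would be needed there.
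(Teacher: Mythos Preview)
Your argument is correct and shares the paper's overall plan: invoke Lemma~\ref{lem4.3} for product boundedness when $2\le d\le 3$, then Lemma~\ref{lem2.3} to produce a letter $\kappa$ whose long constant blocks appear arbitrarily late in $\sigma$, and exploit $\rho_{S_\kappa}<1$. The paper, however, uses product boundedness in a different way: it first passes to an operator norm $\pmb{\|}\cdot\pmb{\|}$ with $\pmb{\|}S_k\pmb{\|}\le1$ for every $k$, so that $n\mapsto\pmb{\|}S_{\sigma(n)}\dotsm S_{\sigma(1)}\pmb{\|}$ is \emph{non-increasing}; then a single fixed length $N$ with $\pmb{\|}S_\kappa^N\pmb{\|}<1$ suffices, because Lemma~\ref{lem2.3} supplies infinitely many length-$N$ blocks and each one multiplies the (monotone) norm by the same factor $<1$. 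You instead keep the raw bound $\beta$ and use one \emph{arbitrarily long} block to beat the factor $\beta^2$ coming from the prefix and suffix. Your route is a touch more elementary, since it does not require knowing that product boundedness yields a norm making every generator a weak contraction; the paper's route buys monotonicity, which turns the ``permanence'' issue you flagged into a one-line consequence of sub-multiplicativity.
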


\begin{proof}
From Lemma~\ref{lem4.3}, it follows that System (\ref{eq2.1}) is product bounded in the sense as in (\ref{eq2.4}). So, we can define a norm $\pmb{\|}\cdot\pmb{\|}$ on $\mathbb{R}^{d\times d}$ such that $\pmb{\|}S_k\pmb{\|}\le1$ for all $k=1,\dotsc,K$.

Let $\sigma\in\varSigma_{\bK}^+$ be an arbitrary nonchaotic switching laws of Balde and Jouan as in Definition~\ref{def2.1}. Then we can choose some $\kappa$ as in Lemma~\ref{lem2.3}. Since $\rho_{S_\kappa}^{}<1$, we can find some $N>1$ such that $\pmb{\|}S_\kappa^N\pmb{\|}<1$. Then the statement comes from Lemma~\ref{lem2.3} and the sub-multiplicity of matrix norm.
\end{proof}

A side consequence of Lemma~\ref{lem4.3} is the following statement.

\begin{prop}\label{prop4.5}
Let System $(\ref{eq2.1})$ be periodically stable with dimension $2\le d\le 3$. Then, there holds at least one of the following two statements.
\begin{enumerate}
\item[$(1)$] $\mathrm{(Finiteness\ of\ spectrum)}$ There is a word $(k_1,\dotsc,k_\pi)\in\bK^\pi$, for some $\pi\ge1$, such that
\begin{equation*}
\pmb{\rho}=\sqrt[\pi]{\rho_{S_{k_\pi}\dotsm S_{k_1}}^{}}.
\end{equation*}

\item[$(2)$] $\mathrm{(Finiteness\ of\ norm)}$ There exists an extremal norm $\|\cdot\|_*$, defined on $\mathbb{R}^{d\times d}$, such that
\begin{equation*}
\pmb{\rho}=\max_{\sigma\in\varSigma_{\bK}^+}\sqrt[n]{\|S_{\sigma(n)}\dotsm S_{\sigma(1)}\|_*}\quad\forall n\ge1.
\end{equation*}
\end{enumerate}
Here $\pmb{\rho}$ is defined as in Theorem~\ref{thm4.1}.
\end{prop}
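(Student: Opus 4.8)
The plan is to reduce Proposition~\ref{prop4.5} to a normalization of the system at its joint spectral radius $\pmb{\rho}$ together with the product boundedness supplied by Lemma~\ref{lem4.3}. First I would rescale: replace each $S_k$ by $S_k/\pmb{\rho}$, so that the new system has joint spectral radius equal to $1$; note that this transformation preserves periodical stability in the weak sense that $\rho_{S_{k_\pi}\dotsm S_{k_1}}^{}\le1$ for every word, and it does not change the dimension. By Lemma~\ref{lem4.3} applied to the rescaled system (its hypotheses hold since $2\le d\le3$), the rescaled system is product bounded, i.e.\ there is $\beta>0$ with $\|\widetilde S_{\sigma(n)}\dotsm\widetilde S_{\sigma(1)}\|\le\beta$ for all $\sigma$ and $n$, where $\widetilde S_k=S_k/\pmb{\rho}$.

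Next I would use the standard Barabanov/Protasov construction (or, equivalently, invoke that product boundedness together with irreducibility yields an extremal norm): on the space $\mathbb{R}^d$ define $|x|_*=\sup_{\sigma,n}\|\widetilde S_{\sigma(n)}\dotsm\widetilde S_{\sigma(1)}x\|$, which is finite by product boundedness, is a genuine norm when the system is irreducible, and satisfies $|\widetilde S_k x|_*\le|x|_*$ for every $k$; the induced operator norm $\|\cdot\|_*$ on $\mathbb{R}^{d\times d}$ then has $\|\widetilde S_k\|_*\le1$, hence $\|S_k\|_*\le\pmb{\rho}$, and submultiplicativity gives $\max_\sigma\sqrt[n]{\|S_{\sigma(n)}\dotsm S_{\sigma(1)}\|_*}\le\pmb{\rho}$ for all $n$; the reverse inequality is automatic from the definition of $\pmb{\rho}$ via Gel'fand's formula, so case $(2)$ holds. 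The remaining issue is the reducible case: if $\{S_1,\dotsc,S_K\}$ is \emph{not} irreducible, I would pass to a common invariant subspace and argue by induction on $d$, exactly as in the proof of Lemma~\ref{lem4.3}, so that $\pmb{\rho}$ is realized on some block $\{S_k^{1,1}\}$ of strictly smaller dimension; at the bottom of the induction one lands either on a block that is irreducible (giving an extremal norm, case $(2)$) or on a $1\times1$ block, where $\pmb{\rho}=\max_k|S_k^{1,1}|$ is literally attained by a word of length $\pi=1$, giving case $(1)$. One then has to check that an extremal norm on an invariant block, together with the block-triangular structure and product boundedness of the complementary block, can be assembled into an extremal norm on all of $\mathbb{R}^{d\times d}$, which is routine given the $\mathrm{O}(n^{\lfloor d/2-1\rfloor})$ bound from Lemma~\ref{lem4.3}.

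The main obstacle I anticipate is the bookkeeping in the reducible case: one must ensure that after peeling off invariant subspaces the \emph{joint spectral radius} of the surviving block still equals $\pmb{\rho}$ (not something smaller), and that the alternative ``finiteness of spectrum'' genuinely occurs precisely when the reduction terminates at a scalar block rather than an irreducible one. Concretely, if at some stage every diagonal block of dimension $\ge2$ fails to be irreducible we keep reducing; the process halts, and at the halting stage the block carrying $\pmb{\rho}$ is either irreducible---yielding the extremal norm in $(2)$ via the Barabanov-type sup construction above---or one-dimensional---yielding $(1)$ with $\pi=1$. In low dimension $d\le3$ the only genuinely new phenomenon relative to Lemma~\ref{lem4.3} is that the extremal norm exists on the full space, and for that the explicit polynomial growth bound $\mathrm{O}(n^{\lfloor d/2-1\rfloor})=\mathrm{O}(1)$ is exactly what guarantees product boundedness and hence the finiteness of $|\cdot|_*$; the rest is the classical argument that product boundedness plus irreducibility gives an extremal norm, which I would cite (in the style of the references already invoked, e.g.\ \cite{El, Dai-JMAA, BW, Bar}) rather than reprove.
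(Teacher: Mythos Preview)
Your argument has two concrete issues. First, you invoke Lemma~\ref{lem4.3} for the rescaled family $\widetilde S_k=S_k/\pmb{\rho}$, but Lemma~\ref{lem4.3} requires \emph{strict} periodic stability, while you only observe the ``weak'' inequality $\rho_{\widetilde S_{k_\pi}\dotsm\widetilde S_{k_1}}\le1$. Strictness holds precisely when no finite word realizes $\pmb{\rho}$, i.e.\ when alternative~$(1)$ fails; so the dichotomy must come \emph{before} the appeal to Lemma~\ref{lem4.3}, not after. The paper does exactly this: it first disposes of case~$(1)$, then normalizes to $\pmb{\rho}=1$ with the system still strictly periodically stable, and only then applies Lemma~\ref{lem4.3} to get product boundedness of the full $d$-dimensional system.

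Second, your reduction to irreducible blocks is both unnecessary and hazardous. It is unnecessary because the sup construction $|x|_*=\sup_{n\ge0,\sigma}\|\widetilde S_{\sigma(n)}\dotsm\widetilde S_{\sigma(1)}x\|$ (including the empty product, $n=0$) is automatically a genuine norm whenever the family is product bounded, with no irreducibility hypothesis: positive definiteness comes from $|x|_*\ge\|x\|$. This is exactly the norm the paper produces in one line from product boundedness. Your block route is hazardous because, as the paper explicitly remarks immediately after the proof, an extremal norm for a sub-block need \emph{not} extend to an extremal norm for the full system; the step you call ``routine'' is precisely the one the authors flag as illegitimate. In short: put the case split on~$(1)$ first, drop the irreducibility/block discussion entirely, and build the extremal norm directly on $\mathbb{R}^d$ from the product boundedness that Lemma~\ref{lem4.3} already gives you.
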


\begin{proof}
If statement (1) of Proposition~\ref{prop4.5} holds, then we are done. Otherwise, without loss of generality we may assume System
(\ref{eq2.1}) is periodically stable and $\pmb{\rho}=1$. Thus it is product bounded according to Lemma~\ref{lem4.3}.
Then there exists a vector norm $\|\cdot\|_*$ defined on $\mathbb{R}^{d}$, where $2\le d\le 3$, such that $\|S_k\|_*\le 1$ for any $1\le k\le K$. Therefore, one has
$$\sqrt[n]{\|S_{\sigma(n)}\dotsm S_{\sigma(1)}\|_*}\le \pmb{\rho}\quad\forall \sigma\in\varSigma_{\bK}^+\textrm{ and }n\ge1.$$
This implies that
\begin{align*}
\pmb{\rho}&=\inf_{n\ge1}\left\{\max_{\sigma\in\varSigma_{\bK}^+}\sqrt[n]{\|S_{\sigma(n)}\dotsm S_{\sigma(1)}\|_*}\right\}\\
&\le\sup_{n\ge1}\left\{\max_{\sigma\in\varSigma_{\bK}^+}\sqrt[n]{\|S_{\sigma(n)}\dotsm S_{\sigma(1)}\|_*}\right\}\\
&\le\pmb{\rho},
\end{align*}
and the proof of Proposition~\ref{prop4.5} is thus completed.
\end{proof}

We note here that this result cannot be proved by directly reducing the dimension $d$, since an extremal norm of some sub-blocks of System (\ref{eq2.1}) does not need to be an extremal norm for the full dimensional case.

We ends this section with some remarks on Proposition~\ref{prop4.5}.

\begin{remark}
For System (\ref{eq2.1}) in the case of $2\le d\le 3$, if spectral finiteness property does not hold, then there
exists an extremal norm $\|\cdot\|_*$. Conversely, the non-existence of an extremal norm implies that the finiteness property must hold.
\end{remark}

\begin{remark}
Based on \cite{BM, BTV, Koz07, HMST} we can easily see that Proposition~\ref{prop4.5} does not need to hold for the case $d\ge4$. In fact, there are uncountably many values of the real parameters $\alpha,
\beta$ such that for each pair $(\alpha,\beta)$,
$F=\{F_1,F_2\}$ is periodically stable, where
$$
F_1=\alpha\begin{pmatrix}1&1\\ 0&1\end{pmatrix}\quad \textrm{and} \quad
F_2=\beta\begin{pmatrix}1&0\\ 1&1\end{pmatrix};
$$
but there is at least one
switching law $\pmb{\sigma}\in\varSigma_{\bK}^+$ where $\bK=\{1,2\}$ such that
\begin{equation*}
\|F_{\pmb{\sigma}(n)}\cdots F_{\pmb{\sigma}(1)}\|\not\to 0\quad \textrm{as }n\to+\infty.
\end{equation*}
Define
$$
S_1=\begin{pmatrix}F_1&F_1\\ 0&F_1\end{pmatrix}\quad \textrm{and}\quad S_2=\begin{pmatrix}F_2&F_2\\ 0&F_2\end{pmatrix}.
$$
Then for any $\sigma\in\varSigma_{\bK}^+$ and any $n\ge1$, we have
$$
S_{\sigma(n)}\dotsm S_{\sigma(1)}=\begin{pmatrix}F_{\sigma(n)}\dotsm F_{\sigma(1)}&n F_{\sigma(n)}\dotsm F_{\sigma(1)}\\ 0&F_{\sigma(n)}\dotsm F_{\sigma(1)}\end{pmatrix}.
$$
For $\pmb{\sigma}$, we particularly get
$$\limsup_{n\to+\infty}\|S_{\pmb{\sigma}(n)}\dotsm S_{\pmb{\sigma}(1)}\|=+\infty$$
for any norm $\|\cdot\|$ on $\mathbb{R}^{4\times 4}$.
\end{remark}
%%%%%%%%%%%%%%%%%%%%%%%%%%%%%%%%%%%%%%%%%%%%%%%%%%%%%%%%%%%%%%%%%%%%%%%%%%
%%%%%%%%%%%%%%%%%%%%%%%%%%%%%%%%%%%%%%%%%%%%%%%%%%%%%%%%%%%%%%%%%%%%%%%%%%
\section{Concluding remarks}
In this paper, we have introduced the dynamical concept---chaotic switching laws---for a discrete-time linear inclusion dynamical system that is induced by finitely many nonsingular square matrices. We have proven that if the inclusion system has a stable word and meanwhile an expanding word, then its chaotic switching laws form a residual subset of its all possible switching laws. Therefore in this case, the dynamical behavior of this inclusion system is unpredictable.
%%%%%%%%%%%%%%%%%%%%%%%%%%%%%%%%%%%%%%%%%%%%%%%%%%%%%%%%%%%%%%%%%%%%%%%%%%

\end{document}